\newcommand{\VL}[1]{#1}
\newcommand{\VC}[1]{}
\newcommand{\incl}{\subseteq}
\newcommand{\couic}[1]{}
\newcommand{\pa}[1]{\left( #1 \right)}
\newcommand{\vt}{V}
\newcommand{\dom}{\textrm{dom}}
\newcommand{\ports}{\pi}
\newcommand{\port}{\!:\!}
\newcommand{\im}{\operatorname{im}}
\newcommand{\tili}[1]{\widetilde{#1}}
\def\N{\mathbb{N}}
\begin{document}
\title{Generalized Cayley Graphs\\ and Cellular Automata over them}

\author{Pablo Arrighi\inst{1,2} \and Simon Martiel\inst{3} \and Vincent Nesme\inst{1}}
\institute{
Universit\'e de Grenoble, LIG, 220 rue de la chimie, 38400 Saint-Martin-d'H\`eres, France\\
\email{parrighi@imag.fr}, \email{vnesme@imag.fr}
\and
Universit\'e de Lyon, LIP, 46 all\'ee d'Italie, 69008 Lyon, France\\
\and
Universit\'e Nice-Sophia Antipolis, I3S, 2000 routes des Lucioles, 06900 Sophia Antipolis, France\\
\email{martiel@i3s.unice.fr}}

\maketitle

\begin{abstract}
Cayley graphs have a number of useful features: the ability to graphically represent finitely generated group elements and their relations; to name all vertices relative to a point; and the fact that they have a well-defined notion of translation. We propose a notion of graph associated to a language, which conserves or generalizes these features. Whereas Cayley graphs are very regular; associated graphs are arbitrary, although of a bounded degree. Moreover, it is well-known that cellular automata can be characterized as the set of translation-invariant continuous functions for a distance on the set of configurations that makes it a compact metric space; this point of view makes it easy to extend their definition from grids to Cayley graphs. Similarly, we extend their definition to these arbitrary, bounded degree, time-varying graphs. The obtained notion of Cellular Automata over generalized Cayley graphs is stable under composition and under inversion.
\VL{\\{\bf Keywords.} {\em Causal Graph Dynamics, Curtis-Hedlund-Lyndon, Dynamical networks, Boolean networks, Generative networks automata, Graph Automata, Graph rewriting automata, Parallel graph transformations, Amalgamated graph transformations, Time-varying graphs, Regge calculus, Local, No-signalling, Reversibility.}}
\end{abstract}

\section*{Introduction}

\noindent {\em Cayley graphs.} Cayley graphs are graphs associated to a finitely generated group, more precisely to a finite set of generators and their inverses. For instance let this set be $\pi=\{a,a^{-1},b,b^{-1},\ldots\}$. Then the vertices of the graph can be designated by words on $\pi$, e.g. $a, a^2, a^{-1}, a.b,\ldots$, but more precisely they are the equivalence classes of these words with respect to the group equivalence $\equiv$, e.g. $b^{-1}.b.a$ and $a$ designate the same vertex. The edges are those pairs $(u,u.a)$. Cayley graphs have been used intensively because they have a number of useful features:
\begin{itemize}
\item[$\bullet$] Once an origin has been chosen, all other vertices can be named relatively to the origin.
\item[$\bullet$] The resulting graph represents the group, i.e. the set of terms and their equality.
\item[$\bullet$] There is a well-defined notion of translation of the graph, which corresponds to changing the point representing the origin, or equivalently applying an element of the group to all vertices.
\item[$\bullet$] The set of configurations over a given Cayley graph (i.e. labellings of the graph) can be given the structure of a compact metric space, which has been used in order to define Cellular Automata over them.
\end{itemize}
{ In this paper, we propose a notion of graph associated to an adjacency language $L$ and its equivalence relation $\equiv_L$, which conserves or generalizes all of these features. Whereas Cayley graphs are very regular, associated graphs are arbitrary, albeit connected and of a bounded degree.}

\noindent {\em Cellular Automata.} Cellular Automata (CA) consist of a grid of identical square cells, each of which may take one of a finite number of possible states. The entire array evolves in discrete time steps. The time evolution is required to be translation-invariant (it commutes with translations of the grid) and causal (information cannot be transmitted faster than a fixed number of cells per time step). Whilst Cellular Automata are usually defined as exactly the functions having those physics-like symmetries, it turns out that they can also be characterized in purely mathematical terms as the set of translation-invariant continuous functions \cite{Hedlund} for a certain compact metric. As a consequence CA definitions are quite naturally extended from grids to Cayley graphs, where most of the theory carries through \cite{Roka,Coornaert}. Moving on, there have been several approaches to generalize Cellular Automata not just to Cayley graphs, but to arbitrary connected graphs of bounded degree:
\begin{itemize}
\item[$\bullet$] With a fixed topology, in order to describe certain distributed algorithms \cite{PapazianRemila,Gruner,Gruner2}, or to generalize the Garden-of-Eden theorem \cite{Gromov,CeccheriniEden}.
\item[$\bullet$] Through the simulation environments of \cite{GiavittoMGS,Mammen,Kurth} which offer the possibility of applying a local rewriting rule simultaneously in different non-conflicting places. 
\item[$\bullet$] Through concrete instances advocating the concept of CA extended to time-varying graphs as in \cite{TomitaGRA,KreowskiKuske,MeyerLove}, some of which are advanced algorithmic constructions \cite{TomitaSelfReproduction,TomitaSelfDescription}.
\item[$\bullet$] Through Amalgamated Graph Transformations \cite{BFHAmalgamation,LoweAlgebraic} and Parallel Graph Transformations \cite{EhrigLowe,Taentzer,TaentzerHL}, which work out rigorous ways to apply a local rewriting rule synchronously throughout a graph.
\end{itemize}
{ The approach of this paper is different in the sense that it first generalizes Cayley graphs, and then applies the mathematical characterization of Cellular Automata as the set of translation-invariant continuous functions in order to generalize CA. Compared with the above mentioned CA papers, the contribution is to extend the fundamental structure theorems about Cellular Automata to arbitrary, connected, bounded degree, time-varying graphs. Compared with the above mentioned Graph Rewriting papers, the contribution is to deduce aspects of Amalgamated/Parallel Graph Transformations from the axiomatic and topological properties of the global function.}

\noindent {\em Causal Graph Dynamics.} The work \cite{ArrighiCGD} by Dowek and one of the authors already achieves an extension of Cellular Automata to arbitrary, bounded degree, time-varying graphs, also through a notion of continuity, with the same motivations. However, graphs in \cite{ArrighiCGD} lack a compact metric over graphs, which is left as an open question. As a consequence all the necessary facts about the topology of Cayley graphs get reproven. It also leaves open whether causal graph dynamics are computable. { These issues vanish in the new formalism; which suggests that the new formalism itself is the main contribution of this paper.} 

\noindent {\em This paper.} Section \ref{sec:AssociatedGraphs} provides a generalization of Cayley graphs. This takes the form of an isomorphism between graphs and languages endowed with an equivalence.
Section \ref{sec:Operations} provides basic operations upon generalized Cayley graphs. 
Section \ref{sec:Topology} provides facts about the topology of generalized Cayley graphs. It  follows that continuous functions are uniformly continuous.
Section \ref{sec:Causality} establishes a notion of Cellular Automata over generalized Cayley graphs. A theorem of equivalence between a mathematical and a constructive approach is given.  It also shows that recognizing valid local rules is a recursive task, as well as that of computing their effect over finite graphs; this grants our model the status of a model of computation. 
Section \ref{sec:Consequences} provides important corollaries: the stability of the notion of Cellular Automata over over generalized Cayley graphs under composability and taking the inverse. It quickly mentions the status of the Garden of Eden theorem in this setting, as well as interesting subclasses of graph dynamics. 

\section{Generalized Cayley graphs}\label{sec:AssociatedGraphs}

Basically, generalized Cayley graphs are your usual, connected, undirected, bounded-degree graphs, but with five added twists:
\begin{itemize}
\item[$\bullet$] Edges are between ports of vertices, rather than vertices themselves, so that each vertex can distinguish its different neighbours, via the port that connects to it. 
\item[$\bullet$] There is a privileged vertex playing the role of an origin, so that any vertex can be referred to relative to the origin, via a sequence of ports that lead to it. 
\item[$\bullet$] The graphs are considered modulo isomorphism, so that only the relative position of the vertices can matter.
\item[$\bullet$] The vertices and edges are given labels taken in finite sets, so that they may carry an internal state just like the cells of a Cellular Automata. 
\item[$\bullet$] The labelling functions are partial, so that we may express our partial knowledge about part of a graph. For instance is is common that a local function may yield a vertex, its internal state, its neighbours, and yet have no opinion about the internal state of those neighbours. 
\end{itemize} 
The present section is a thorough formalization and study of these Generalized Cayley graphs. A fast-track reading path is given by: the following notations, the constructive view of Definitions \ref{def:graphs}-\ref{def:pointedmodulo}, the algebraic view of Definitions \ref{def:completeness}-\ref{def:adjacencystructures}, their equivalence as discussed in Subsection \ref{subsec:equivalence}.

{\em Notations.} Let $\pi$ be a finite set, $\Pi=\pi^2$ be its square, and $V={\cal P}(\Pi^*)$ the set of languages over the alphabet $\Pi$. The operator `$.$' represents the concatenation of words and $\varepsilon$ the empty word, as usual.\\
The {\em vertices} of the graphs (see Figure \ref{fig:graphs}$(a)$) we consider in this paper are uniquely identified by a name $u$ in $V$. 
Vertices may also be labelled with a {\em state} $\sigma(u)$ in $\Sigma$ a finite set.
Each vertex has {\em ports} in the finite set $\pi$. 
A vertex and its port are written $u \port  a$.\\  
An {\em edge} is an unordered pair $\{u \port a, v \port b\}$. 
Such an edge connects vertices $u$ and $v$; we shall consider connected graphs only.
The port of a vertex can only appear in one edge, so that the degree of the graphs is always bounded by $|\ports|$. 
Edges may also be labelled with a {\em state} $\delta(\{u \port a, v \port b\})$ in $\Delta$ a finite set. 

\subsection{Graphs as paths}

\noindent Definitions \ref{def:graphs} to \ref{def:isomorphism} are as in \cite{ArrighiCGD}. The first two are reminiscent of the many papers seeking to generalize Cellular Automata to arbitrary, bounded degree, fixed graphs \cite{PapazianRemila,Gruner,Gruner2,Gromov,CeccheriniEden,TomitaGRA,KreowskiKuske,TomitaSelfReproduction,TomitaSelfDescription,BFHAmalgamation,LoweAlgebraic,EhrigLowe,Taentzer,TaentzerHL}.
They are illustrated by Figure \ref{fig:graphs}$(a)$.

\begin{definition}[Graph]\label{def:graphs}
A {\em graph} $G$ is given by 
\begin{itemize}
\item[$\bullet$] An at most countable subset $\vt(G)$ of $V$, whose elements are called {\em vertices}.
\item[$\bullet$] A finite set $\ports$, whose elements are called {\em ports}.
\item[$\bullet$] A set $E(G)$ of non-intersecting two element subsets of $\vt(G)\port\ports$, whose elements are called edges. In other an edge $e$ is of the form $\{u \port a, v \port b\}$, and $\forall e,e'\in E(G), e\cap e'\neq \emptyset \Rightarrow e=e'$. 
\end{itemize}
The graph is assumed to be connected, i.e. for any two $u,v\in V(G)$, there exists $v_1,\ldots , v_{n-1}\in V(G)$ such that for all $i\in\{0\ldots n-1\}$, one has $\{v_i\port a_i,v_{i+1}\port b_i\}\in E(G)$ with $v_0=u$ and $v_n=v$.
\end{definition}

\begin{definition}[Labelled graph]
A labelled graph is a triple $(G,\sigma,\delta)$, also denoted simply $G$ when it is unambiguous, where $G$ is a graph, and $\sigma$ and $\delta$ respectively label the vertices and the edges of $G$:
\begin{itemize}
\item[$\bullet$] $\sigma$ is a partial function from $V(G)$ to $\Sigma$;
\item[$\bullet$] $\delta$ is a partial function from $E(G)$ to $\Delta$.
\end{itemize}
The {\em set of all graphs} with ports $\ports$ is written ${\cal G}_{\ports}$.  
The {\em set of labelled graphs} with states $\Sigma,\Delta$ and ports $\ports$ is written ${\cal G}_{\Sigma,\Delta,\ports}$.
To ease notations, we sometimes write $v \in G$ for $v \in \vt(G)$.
\end{definition}
We now want to single out a vertex. The following definition is illustrated by Figure \ref{fig:graphs}$(b)$.
\begin{definition}[Pointed graph]
A {\em pointed (labelled) graph} is a pair $(G,p)$ with $p\in G$. 
The {\em set of pointed graphs}  with ports $\ports$  is written ${\cal P}_{\ports}$.
The {\em set of pointed labelled graphs} with states $\Sigma,\Delta$ and ports $\ports$ is written ${\cal P}_{\Sigma,\Delta,\ports}$.
\end{definition}

\begin{definition}[Isomorphism]\label{def:isomorphism}
An {\em isomorphism} $R$ is a function from ${\cal G}_{\pi}$ to ${\cal G}_{\pi}$ which is specified by a bijection $R(.)$ from $V$ to $V$. 
The image of a graph $G$ under the isomorphism $R$ is a graph $RG$ whose set of vertices is $R(\vt(G))$, and whose set of edges is $\{\{R(u):a,R(v):b\} \;|\; \{u:a,v:b\}\in E(G) \}$. 
Similarly, the image of a pointed graph $P=(G,p)$ is the pointed graph $RP=(RG,R(p))$. 
When $P$ and $Q$ are isomorphic we write $P\approx Q$, defining an equivalence relation on the set of pointed graphs. The definition extends to pointed labelled graphs.
\end{definition}

\noindent In the particular graphs we are considering, the vertices can be uniquely distinguished by the paths that lead to them starting from the pointer vertex. Hence, we might just as well forget about vertex names. The following definition is illustrated by Figure \ref{fig:graphs}$(c)$.
\begin{definition}[Pointed graph modulo]\label{def:pointedmodulo}
Let $P$ be a pointed (labelled) graph $(G,p)$. The {\em pointed (labelled) graph modulo} $\tilde{P}$ is the equivalence class of $P$ with respect to the equivalence relation $\approx$. The {\em set of pointed graphs modulo} with ports $\ports$ is written ${\cal \tilde{P}}_{\ports}$. The {\em set of pointed labelled graphs modulo} with states $\Sigma,\Delta$ and ports $\ports$ is written ${\cal \tilde{P}}_{\Sigma,\Delta,\ports}$.
\end{definition}
\noindent These pointed graph modulo will constitute the set of {\em configurations} (a.k.a. {\em generalized Cayley graphs}) of the generalized Cellular Automata that we will consider in this paper. For now, we want to reach a more algebraic description of them. Indeed, given such a pointed graph modulo, its set of paths forms a language, endowed with a notion of equivalence whenever two paths designate the same vertex. The language, together with its equivalence, is referred to as a path structure.

\begin{figure}

\includegraphics[scale=1]{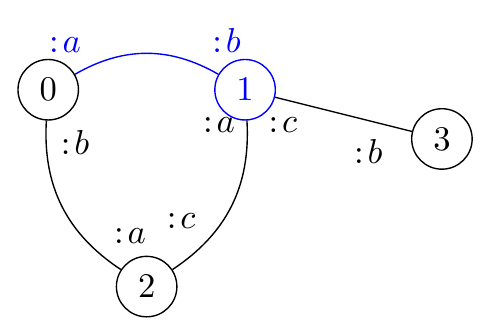}
\caption{\label{fig:graphs} {\em The different types of graphs.} (a) A graph. (b) A pointed graph. (c) A pointed graph modulo. We will see in the next sections that (c) can also be described as a language $\{\varepsilon,ab, ab.ba,\ldots\}$ and an equivalence relation with equivalence classes corresponding to vertices: $\tilde{\varepsilon}=\{\varepsilon,ab.ba,\ldots\}$ and $\tilde{ab}=\{ab,ab.ba.ab,\ldots\}$ }
\end{figure}

\begin{definition}[Path]
Given a pointed graph modulo $\tilde{P}$, we say that $\alpha$ is a path of $\tilde{P}$ if and only if there is a sequence $\alpha$ of ports $a_ib_i$ such that, starting from the pointer, it is possible to travel in the graph according to this sequence. More formally, $\alpha$ is a path if and only if there exists $(G,p)\in \tilde{P}$ and $v_1,\ldots , v_{n}\in V(G)$ such that for all $i\in\{0\ldots n-1\}$, one has $\{v_i\port a_i,v_{i+1}\port b_i\}\in E(G)$, with $v_0=p$ and $\alpha_i=a_ib_i$. Notice that the existence of a path does not depend on the choice of $P\in \tilde{P}$. The {\em language of paths} of $\tilde{P}$ is written $L(\tilde{P})$, and is set of all the paths of $\tilde{P}$.
\end{definition}

\begin{definition}[Equivalence of paths]
Given a pointed graph modulo $\tilde{P}$, we define the {\em equivalence of paths} relation $\equiv_{\tilde{P}}$ on $L(\tilde{P})$ such that for all paths $u,u'\in L(\tilde{P})$, $u\equiv_{\tilde{P}} u'$ if and only if, starting from the pointer, $u$ and $u'$ lead to the same vertex of $\tilde{P}$. 
More formally, $u\equiv_{\tilde{P}}u'$ if and only if there exists $(G,p)\in \tilde{P}$ and $v_1,\ldots , v_{|u|},v'_1,\ldots , v'_{|u'|}\in V(G)$ such that for all $i\in\{0\ldots |u|-1\}$, $i'\in\{0\ldots |u'|-1\}$, one has $\{v_i\port a_i,v_{i+1}\port b_i\}\in E(G)$, $\{v'_{i'}\port a'_{i'},v'_{i'+1}\port b'_{i'}\}\in E(G)$, with  $v_0=p$, $v'_0=p$, $u_i=a_ib_i$, $u'_{i'}=a'_{i'}b'_{i'}$ and $v_{|u|}=v_{|u'|}$.
\end{definition}

\begin{definition}[Path structure]\label{def:associatedstructure}
Given a pointed graph modulo $\tilde{P}$, we define the {\em structure of paths} $X(\tilde{P})$ as the structure $\langle L(\tilde{P}),\equiv_{\tilde{P}}\rangle$. The {\em set of all path structures} is the set $\{X(\tilde{P})\;|\;\tilde{P}\in {\cal \tilde{P}}_{\ports}\}$. It is written $X({\cal \tilde{P}}_{\ports})$.
\end{definition}

\noindent Given two pointed graphs modulo, any difference between them shows up in their path structure.
\begin{proposition}[Pointed graphs modulo and path structures isomorphism]\label{prop:graphsaspaths}
The function $\tilde{P}\mapsto X(\tilde{P})$ 
is a bijection between ${\cal \tilde{P}}_{\ports}$ and $X({\cal \tilde{P}}_{\ports})$.
\end{proposition}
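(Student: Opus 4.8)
The plan is to prove that $\tilde P \mapsto X(\tilde P)$ is a bijection between $\tilde{\mathcal P}_\pi$ and its image $X(\tilde{\mathcal P}_\pi)$. Surjectivity onto the image is immediate, since the image is defined precisely as the set of values taken by the map; so the entire content of the proposition is \emph{injectivity}. Thus what I really have to show is: if $\tilde P$ and $\tilde Q$ are two pointed graphs modulo with $X(\tilde P) = X(\tilde Q)$, meaning they have the same language of paths $L(\tilde P) = L(\tilde Q)$ and the same path-equivalence $\equiv_{\tilde P} = \equiv_{\tilde Q}$, then $\tilde P = \tilde Q$, i.e. $P \approx Q$ for representatives.

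First I would fix representatives $(G,p) \in \tilde P$ and $(H,q) \in \tilde Q$ and build a candidate isomorphism $R$ directly from the shared path structure. The key observation is that the quotient $L(\tilde P)/{\equiv_{\tilde P}}$ is in natural bijection with the reachable vertices of $G$: every vertex is reachable from $p$ by connectivity, the empty word $\varepsilon$ names $p$, and two paths name the same vertex exactly when they are $\equiv_{\tilde P}$-equivalent. So I get a bijection $\phi_G$ from equivalence classes to $V(G)$, and likewise $\phi_H$ for $H$. Since the two structures coincide as $\langle L, \equiv\rangle$, the classes are literally the same set, and I can define the vertex map $R$ on $V(G)$ by $R = \phi_H \circ \phi_G^{-1}$ (extended arbitrarily by the identity outside $V(G)$, or more carefully so as to respect Definition \ref{def:isomorphism}'s requirement that $R(\cdot)$ be a bijection of $V$).

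Next I would verify that $R$ is actually a graph isomorphism: it sends $p$ to $q$, and it preserves edges together with their port decorations. Edge preservation is where the equivalence relation does its work. An edge $\{u\port a, v\port b\}$ in $G$ corresponds to the fact that, having reached the class of $u$, appending the port-pair $ab$ is a legal extension of a path landing in the class of $v$; this is an intrinsic statement about $L$ and $\equiv$, hence holds identically in $H$. Care is needed to argue that the \emph{ports} themselves — not just the adjacency — are recovered, which is why paths are sequences of port-pairs $a_ib_i$ rather than bare steps: the port data is encoded in the very letters of the words in $L$. I would also need to carry the labels $\sigma,\delta$ along, but since the labels of a vertex/edge are determined by its name relative to the pointer and the path structure sees those names, the labelled case follows by the same bookkeeping once the unlabelled skeleton is matched.

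The main obstacle I expect is not conceptual but administrative: carefully checking that the port-level edge data and the vertex-naming are unambiguously recovered from $\langle L, \equiv\rangle$, i.e. that distinct edges incident to the same vertex through distinct ports give rise to genuinely distinct, distinguishable letters in the path language. The nondeterminism hidden in ``travel according to the sequence'' (a port of a vertex sits in at most one edge, by the non-intersecting condition of Definition \ref{def:graphs}, so each legal step is in fact deterministic) is what guarantees that a path names a well-defined vertex; I would lean on that determinism to show $\phi_G$ is well defined and that $R$ preserves exactly the edges. Once determinism and the class–vertex correspondence are pinned down, injectivity — and hence the bijection — falls out.
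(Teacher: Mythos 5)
Your proposal is correct and follows essentially the same route as the paper's proof: both establish the bijection between $\equiv$-classes and vertices of a chosen representative of each pointed graph modulo, define the vertex map by composing these correspondences, extend it to a bijection of $V$, and verify edge (and port) preservation from the intrinsic fact that an edge $\{u\port a, v\port b\}$ is equivalent to the class of $u$ extended by $ab$ being the class of $v$. Your explicit remarks on determinism of path-following and on recovering port data are just more careful statements of steps the paper handles implicitly.
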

\VL{
\begin{proof}
$[$Surjectivity$]$. By definition of $X({\cal \tilde{P}}_{\ports})$.\\
$[$Injectivity$]$.
Let us suppose that $X(\tilde{P})=X(\tilde{Q})$. Then $\equiv_{\tilde{P}}$ and $\equiv_{\tilde{Q}}$ must have the same number of equivalence classes and $|V(\tilde{P})|=|V(\tilde{Q})|$. Let us choose two graphs $P\in \tilde{P}$ and $Q\in \tilde{Q}$.
 For any vertex $u$ of $P$, there is a unique equivalence class $c$ of $\equiv_{\tilde{P}}$ such that the paths of $c$ lead to $u$ in $P$. Since $\equiv_{\tilde{P}}$ and $\equiv_{\tilde{Q}}$ are supposed equal, $c$ is also an equivalence class of $\equiv_{\tilde{Q}}$.
 Conversely given $c$ an equivalence class of $\equiv_{\tilde{Q}}$, there is a unique $v$ of $Q$ such that the paths of $c$ lead to $v$ in $Q$. 
Then, the paths which point to $u$ in $P$ are the same as those which point to $v$ in $Q$. We can now define a function $R$ which maps each vertex $u$ in $P$ to its corresponding vertex $v$ in $Q$. Because this is a  bijection, we can then extend $R$ to be a bijection over the entire set $V$. Let us consider two vertices $u$ and $u'$ in $P$ linked by and edge $\{u:i,u':j\}$ and their corresponding vertices $v$ and $v'$ in $Q$. As $P\in\tilde{P}$, we have that the equivalence classes $\tilde{u}.ij=\tilde{u'}$. As the classes representing $v$ and $v'$ are equal to $\tilde{u}.ij$ and $\tilde{u'}$. Thus $R$ is a graph isomorphism, and $P$ and $Q$ are isomorphic. This is true for every $P\in \tilde{P}$ and $Q\in \tilde{Q}$ thus $\tilde{P}=\tilde{Q}$.\qed
\end{proof}
}

\subsection{Paths as languages}

\noindent Inversely, we could have started by defining a certain class of languages endowed with an equivalence, namely adjacency structures, and then asked whether the path structures of graph modulo fall into this class. This is the purpose of the following definitions and lemma.

\begin{definition}[Completeness]\label{def:completeness}
Let $L\subseteq \Pi^*$ be a language and $\equiv_L$ an equivalence on this language. The tuple $(L,\equiv_L)$ is said to be {\em complete} if and only if
\begin{itemize}
\item[(i)] $ \forall u,v\in \Pi^*\quad u.v \in L \Rightarrow u\in L$
\item[(ii)] $\forall u,u'\in L\,\forall v\in\Pi^*\quad (u\equiv_L u'\,\wedge\, u.v\in L) \Rightarrow (u'.v\in L \,\wedge\, u'.v\equiv_L u.v)$
\item[(iii)] $\forall u\in L\,\forall a,b\in \ports \quad u.ab\in L \Rightarrow (u.ab.ba\in L \,\wedge\, u.ab.ba\equiv_L u)$
\end{itemize}
\end{definition}
The completeness conditions aim at making sure that $(L,\equiv_L)$, seen as some algebra of paths, is complete. Indeed: $(i)$ means that ``a shortened path remains a path''; $(ii)$ means that ``Different possible paths from $A$ to $B$, and then a path from $B$ to $C$, must lead to different possible paths from $A$ to $C$''; $(iii)$ means that ``if a step takes you from $A$ to $B$, the inverse step takes you from $B$ to $A$''. 

\begin{definition}[Adjacency structure]\label{def:adjacencystructures}
Let $L\subseteq \Pi^*$ be a language and $\equiv_L$ an equivalence on this language. The tuple $(L,\equiv_L)$ defines an {\em adjacency structure} if and only if it is complete and
$$\forall u,u'\in L\,\forall a,b,c \in \ports \quad (u\equiv_L u'\wedge u.ab\in L \wedge u'.ac\in L) \Rightarrow b=c.$$
When this is the case, $L$ is referred to as an {\em adjacency language} and $\equiv_L$ as an {\em adjacency equivalence}. 
We denote by $\langle L,\equiv \rangle$ an adjacency structure of langage $L$ and equivalence relation $\equiv$.  The {\em set of all adjacency structures} is written ${\cal X}_{\ports}$. From now on, $X$ will represent an element of ${\cal X}_{\ports}$.
\end{definition}
The added adjacency structure condition aims at making sure that $(L,\equiv_L)$, seen as some algebra of paths, is port-unambiguous, meaning that ``once at some place $A$, taking port $a$ leads to a definite place $B$''.

\begin{definition}[Associated (pointed) graph (modulo)]\label{def:associatedgraph}
Let $X$ be some adjacency structure $\langle L,\equiv_L \rangle$. Let $P(X)$ be the pointed graph $(G(X),\tilde{\varepsilon})$, with $G(X)$ such that:
\begin{itemize}
\item[$\bullet$] The set of vertices $V(G(X))$ is the set of equivalence classes of $X$;
\item[$\bullet$] The edge $\{\tilde{u}\port a,\tilde{v}\port b\}$ is in $E(G(X))$ if and only if $u.ab \in L$ and $u.ab\equiv_L v$, for all $u\in \tilde{u}$ and $v\in \tilde{v}$.
\end{itemize}
We define the {\em associated graph} to be $G(X)$.
We define the {\em associated pointed graph} to be $P(X)$.
We define the {\em associated pointed graph modulo} to be $\tilde{P}(X)$.
\end{definition}
\noindent {\em Soundness:} The properties of adjacency structures ensure that the ports of the vertices are not used several times. Moreover, $G(X)$ (and thus $P(X)$ are connected as every vertex is path connected to the vertex $\tilde{\varepsilon}$.

\begin{definition}[Labelled adjacency structure]
Let $X=\langle L,\equiv_L  \rangle$ be a an adjacency structure. 
A labelling with states $\Sigma,\Delta$ is given by a labelling for $G(X)$.
The {\em set of labelled adjacency structures} with states $\Sigma,\Delta$ and ports $\ports$ is written ${\cal X}_{\Sigma,\Delta,\ports}$.
\end{definition}
\noindent These labelled adjacency structure (a.k.a. {\em generalized Cayley graphs}) will constitute the set of {\em configurations} of the generalized Cellular Automata that we will consider in this paper. They are the algebraic counterpart of Definition \ref{def:pointedmodulo}, as we shall now prove.  

\VL{
\begin{lemma}[Path structures are adjacency structures]
Let $\tilde{P}$ be a pointed graph modulo. Then $X(\tilde{P})$ is an adjacency structure.
Hence  $X({\cal \tilde{P}}_{\ports})\subseteq {\cal X}_{\ports}$.
\end{lemma}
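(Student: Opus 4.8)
The plan is to check, directly against the graph-theoretic definitions of $L(\tilde{P})$ and $\equiv_{\tilde{P}}$, the four clauses defining an adjacency structure: the completeness conditions (i)--(iii) of Definition \ref{def:completeness} and the port-unambiguity condition of Definition \ref{def:adjacencystructures}. Throughout I would fix one representative $(G,p)\in\tilde{P}$ and reason inside $G$; since, as already observed, neither membership in $L(\tilde{P})$ nor the relation $\equiv_{\tilde{P}}$ depends on this choice, every conclusion obtained in $G$ is a statement about $\tilde{P}$.

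For (i), if $u.v\in L(\tilde{P})$ then there is a walk from $p$ spelling out the whole sequence $u.v$; its initial segment spells out $u$, so $u\in L(\tilde{P})$. For (ii), assume $u\equiv_{\tilde{P}} u'$ and $u.v\in L(\tilde{P})$: the paths $u$ and $u'$ both terminate at one vertex $A$, while $v$ labels a walk starting at $A$. Transplanting that $v$-walk onto the end of $u'$ --- legitimate because $u'$ also ends at $A$ --- realizes $u'.v$, so $u'.v\in L(\tilde{P})$; and since $u.v$ and $u'.v$ share the terminus of the $v$-walk, $u.v\equiv_{\tilde{P}} u'.v$. For (iii), if $u.ab\in L(\tilde{P})$ the final step uses an edge $\{A\port a,B\port b\}\in E(G)$, with $A$ the vertex reached by $u$. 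As edges are unordered pairs, the very same edge carries one from $B$ back to $A$ by the step $ba$; hence $u.ab.ba\in L(\tilde{P})$ and, this walk returning to $A$, we get $u.ab.ba\equiv_{\tilde{P}} u$.

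The port-unambiguity clause is the one place where the structural axiom of Definition \ref{def:graphs} does the work. Suppose $u\equiv_{\tilde{P}} u'$, $u.ab\in L(\tilde{P})$ and $u'.ac\in L(\tilde{P})$. Both $u$ and $u'$ reach a common vertex $A$, and their last steps exhibit edges $\{A\port a,B\port b\}$ and $\{A\port a,C\port c\}$ in $E(G)$. These two edges both contain $A\port a$, hence intersect; by the non-intersecting-edges requirement they coincide, and since each is a genuine two-element set their remaining elements must agree, i.e. $B\port b=C\port c$, forcing $b=c$. This proves $X(\tilde{P})$ is an adjacency structure; the inclusion $X({\cal \tilde{P}}_{\ports})\subseteq {\cal X}_{\ports}$ then follows at once, as every element of the left-hand side is by definition a path structure $X(\tilde{P})$.

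I do not expect a genuine obstacle: all four verifications are direct unwindings of the definitions. The only points demanding any care are the bookkeeping in (ii) --- confirming that the $v$-walk attaches to $u'$ and that both concatenations share an endpoint --- and, for port-unambiguity, the realization that the argument hinges entirely on the single axiom that edges of $G$ do not share a port. I would accordingly write the proof as a short paragraph per clause, flagging the role of that axiom in the final one.
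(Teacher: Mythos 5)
Your proof is correct and follows essentially the same route as the paper's: a direct verification of completeness conditions (i)--(iii) and of port-unambiguity against the graph-theoretic definitions of $L(\tilde{P})$ and $\equiv_{\tilde{P}}$. If anything, your treatment of the port-unambiguity clause is more explicit than the paper's, which merely appeals to the graph being ``well defined'' where you correctly pinpoint the non-intersecting-edges axiom of Definition~\ref{def:graphs} as the operative fact.
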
}
\VL{
\begin{proof}
$[$Completeness$]$. If $u.v$ is a valid path in $\tilde{P}$, then the truncated path $u$ is a valid path in $\tilde{P}$ and belongs to $L(\tilde{P})$.\\
If two paths $u$ and $v$ in $\tilde{P}$ lead to the same vertex, i.e. $u \equiv_{\tilde{P}} v$, then extending $u$ and $v$ by the same path $w$ will still lead to the same vertex  i.e.  if $u.w \in L(\tilde{P})$  $u.w \equiv_{\tilde{P}} v.w$.\\
If $u.ab$ is a valid path in $\tilde{P}$ then the extension $u.ab.ba$ consisting in going back on the last visited vertex is still a valid path and leads to the vertex pointed by $u$.\\
Summarizing, the completeness properties are verified by construction of the language of path $L(\tilde{P})$ and the relation $\equiv_{\tilde{P}}$.\\
$[$Adjacency structure$]$.
Let us consider two paths $u$ and $v$ in $L(\tilde{P})$ and three ports $a,b,c$ such that $u\equiv_{\tilde{P}} v$ and $u.ab \equiv_{\tilde{P}} u.ac$. Then, for the graph $\tilde{P}$ to be well defined we have that $b=c$.\qed
\end{proof}
}

\noindent Not only do we have that path structures are adjacency structures, but it also turns out that any adjacency structure can be generated this way, i.e. it is the path structure of some pointed graph modulo.

\begin{proposition}[Adjacency structures are path structures]\label{prop:pathsaslanguages}
Let $X$ be some adjacency structure. The equality $X=X(\tilde{P}(X))$ holds.
Hence ${\cal X}_{\ports}=X({\cal \tilde{P}}_{\ports})$.
\end{proposition}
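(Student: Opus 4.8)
The goal is to show $X = X(\tilde{P}(X))$, i.e. that extracting the path structure from the graph built out of $X$ returns $X$ unchanged. Writing $X = \langle L, \equiv_L\rangle$ and unfolding $X(\tilde{P}(X)) = \langle L(\tilde{P}(X)), \equiv_{\tilde{P}(X)}\rangle$, this amounts to two equalities: $L(\tilde{P}(X)) = L$ (same language of paths) and ${\equiv_{\tilde{P}(X)}} = {\equiv_L}$ (same equivalence). The plan is to reduce both to a single statement relating words in $\Pi^*$, walks in $G(X)$ from the pointer, and $\equiv_L$-classes, proved by induction on word length. Throughout I work with the representative $(G(X),\tilde{\varepsilon})$ of $\tilde{P}(X)$, which is legitimate since path existence does not depend on the chosen representative.

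First I would record that $G(X)$ is a genuine graph in the sense of Definition \ref{def:graphs}, namely that its edges are non-intersecting; this is the Soundness remark and is where the adjacency-structure condition enters. Indeed, if a port-vertex $\tilde{u}\port a$ occurred in two edges $\{\tilde{u}\port a,\tilde{v}\port b\}$ and $\{\tilde{u}\port a,\tilde{v'}\port{b'}\}$, their defining conditions $u.ab\in L$ and $u.ab'\in L$ for a common representative $u$ would force $b=b'$ by port-unambiguity, whence the two edges share the target $\widetilde{u.ab}=\widetilde{u.ab'}$ and coincide. Since no port is reused, any walk from $\tilde{\varepsilon}$ following a sequence of port pairs has, if it exists, a \emph{unique} endpoint; this is what makes ``the vertex reached by a path'' well defined. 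Completeness (iii) together with (ii) similarly guarantees that the edge relation is symmetric, so $G(X)$ is properly undirected.

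The heart of the argument is the following claim, proved by induction on $|\alpha|$ for $\alpha\in\Pi^*$: \emph{$\alpha$ is a path of $\tilde{P}(X)$ if and only if $\alpha\in L$, and in that case the path $\alpha$ leads to the vertex $\tilde{\alpha}$} (the $\equiv_L$-class of the word $\alpha$). The base case $\alpha=\varepsilon$ is immediate: the empty walk sits at the pointer $\tilde{\varepsilon}$, and $\varepsilon\in L$ since an adjacency language is nonempty and hence contains $\varepsilon$ by completeness (i). For the step, write $\alpha=\beta.ab$. If $\alpha$ is a path then so is its prefix $\beta$, which by the induction hypothesis lies in $L$ and reaches $\tilde{\beta}$; the final edge $\{\tilde{\beta}\port a,\cdot\port b\}$ exists exactly when $\beta.ab\in L$, giving $\alpha\in L$ and endpoint $\widetilde{\beta.ab}=\tilde{\alpha}$. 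Conversely, if $\alpha=\beta.ab\in L$ then $\beta\in L$ by completeness (i), so by the induction hypothesis $\beta$ is a path reaching $\tilde{\beta}$; since $\beta.ab\in L$ the edge $\{\tilde{\beta}\port a,\widetilde{\beta.ab}\port b\}$ is present, so $\alpha$ is a path reaching $\widetilde{\beta.ab}=\tilde{\alpha}$. Completeness (ii) is what licenses invoking the edge-existence condition, stated for all representatives of $\tilde{\beta}$, from its validity for the single representative $\beta$.

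Finally I would read off the two equalities. The first half of the claim gives $L(\tilde{P}(X))=L$ directly. For the equivalences, two paths $\alpha,\beta\in L$ satisfy $\alpha\equiv_{\tilde{P}(X)}\beta$ iff they reach the same vertex, which by the second half of the claim means $\tilde{\alpha}=\tilde{\beta}$, i.e. $\alpha\equiv_L\beta$; hence ${\equiv_{\tilde{P}(X)}}={\equiv_L}$. Together these yield $X(\tilde{P}(X))=X$, so ${\cal X}_{\ports}\subseteq X({\cal \tilde{P}}_{\ports})$, which combined with the preceding lemma (path structures are adjacency structures) gives the stated equality ${\cal X}_{\ports}=X({\cal \tilde{P}}_{\ports})$. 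I expect the only delicate point to be the careful tracking of representatives in the induction step, ensuring that the edge-existence condition is applied to a representative for which completeness (ii) permits passage to the class $\tilde{\alpha}$, rather than any conceptual difficulty.
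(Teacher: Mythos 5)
Your proposal is correct and follows essentially the same route as the paper's proof: both rest on the correspondence between words of $L$ and paths of $\tilde{P}(X)$ (with $\equiv_L$ matching ``reaching the same vertex''), read off in both directions to give $L(\tilde{P}(X))=L$ and ${\equiv_{\tilde{P}(X)}}={\equiv_L}$. The paper dispatches that correspondence with ``by construction of $\tilde{P}(X)$'', whereas your induction on word length, the soundness check via port-unambiguity, and the use of completeness (ii) to pass between representatives make the same step explicit --- so your write-up is, if anything, a more rigorous rendering of the published argument.
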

\VL{
\begin{proof}
Let $X=\langle L,\equiv_L\rangle$ and $X'=X(\tilde{P}(X))=\langle
L',\equiv_{L'}\rangle$. Next, we will write $X\subseteq X'$ if and only if $X\subseteq X'$ and $\equiv_{L}\subseteq\equiv_{L'}$, with the relations  $\equiv_{L}$, $\equiv_{L'}$ viewed as subsets of $(L\cup L')^2$.\\

$[X\subseteq X(\tilde{P}(X))]$:

Let us consider $w\in L$. By construction of $\tilde{P}(X)$, there
exists a path $w$ in $\tilde{P}(X)$. By definition of the function
$X$, we have that this path will be represented by the word $w\in L'$.
Now, let us consider two words $u$ and $v$ in $L$ such that $u\equiv
v$. By construction of $\tilde{P}(X)$, $u$ and $v$ will be two paths
of $\tilde{P}(X)$ leading to the same vertex. By definition of the
function $X$, the two words $u$ and $v$  in $L'$ will be equivalent
regarding to the relation $\equiv'$.

$[X(\tilde{P}(X))\subseteq X]$:



Let $w'\in L'$. By definition there exists a path $\omega'$ in $\tilde{P}(X) $ labeled by $w'$ from the pointed vertex to a vertex $u$. By definition 12 there exists a word in $L$ describing the path $\omega'$, hence $w'\in L$. 
Similarly we prove the inclusion $\equiv_{L'}\subseteq\equiv_{L}$.

\end{proof}}

\subsection{Graphs as languages}\label{subsec:equivalence}

\noindent {\em Generalized Cayley graphs.} Summarizing, $X(.)$ is bijective from Proposition \ref{prop:graphsaspaths} and $X\circ \tilde{P}=Id$ from Proposition \ref{prop:pathsaslanguages}, thus $\tilde{P}$ is bijective, i.e. the following theorem comes out as a corollary:
\begin{theorem}[Pointed graphs modulo and adjacency structures isomorphism]\label{th:SP}
The function $\tilde{P}\mapsto X(\tilde{P})$ 
is a bijection between ${\cal \tilde{P}}_{\ports}$ and ${\cal X}_{\ports}$, 
whose inverse is the function $X\mapsto \tilde{P}(X)$. It can be extended into a bijection between ${\cal \tilde{P}}_{\Sigma,\Delta,\ports}$ and ${\cal X}_{\Sigma,\Delta,\ports}$.
\end{theorem}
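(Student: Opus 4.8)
The plan is to obtain the theorem purely by composing the two preceding results, treating $X(\cdot)$ and $\tilde{P}(\cdot)$ as set-theoretic maps and invoking the standard fact that a map admitting a two-sided inverse is a bijection whose inverse is that map. First I would record the unlabelled statement. By Proposition \ref{prop:graphsaspaths}, $X(\cdot)$ is a bijection from ${\cal \tilde{P}}_{\ports}$ onto its image $X({\cal \tilde{P}}_{\ports})$. By Proposition \ref{prop:pathsaslanguages}, this image is exactly ${\cal X}_{\ports}$, so $X(\cdot)$ is in fact a bijection from ${\cal \tilde{P}}_{\ports}$ onto ${\cal X}_{\ports}$.

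Next I would identify the inverse. Proposition \ref{prop:pathsaslanguages} also gives $X(\tilde{P}(X)) = X$ for every adjacency structure $X$, i.e. $X(\cdot) \circ \tilde{P}(\cdot) = Id_{{\cal X}_{\ports}}$. Since $X(\cdot)$ is already known to be a bijection, composing this identity on the left with $X(\cdot)^{-1}$ yields $\tilde{P}(\cdot) = X(\cdot)^{-1}$; in particular $\tilde{P}(\cdot)$ is itself a bijection, and the other composite $\tilde{P}(X(\tilde{P})) = \tilde{P}$ then holds automatically. This settles the unlabelled claim with no further work.

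For the labelled extension, the content is to check that the vertex/edge correspondence underlying the unlabelled bijection is label-preserving. I would make this correspondence explicit (it is the one exhibited in the injectivity argument of Proposition \ref{prop:graphsaspaths}): a vertex $u$ of a representative of $\tilde{P}$ corresponds to the equivalence class of $\equiv_{\tilde{P}}$ whose paths lead to $u$, and an edge $\{u\port a,v\port b\}$ corresponds to the relation $u.ab \equiv_{\tilde{P}} v$, which is precisely how edges of $G(X)$ are defined in Definition \ref{def:associatedgraph}. A labelling of $\tilde{P}$ is a pair of partial functions $\sigma,\delta$ on its vertices and edges, while a labelling of $X$ is by definition a labelling of $G(X)$; transporting $(\sigma,\delta)$ along this correspondence and back gives mutually inverse maps on labellings, so the bijection lifts to ${\cal \tilde{P}}_{\Sigma,\Delta,\ports} \cong {\cal X}_{\Sigma,\Delta,\ports}$.

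I expect the only genuinely delicate point to be this last one: one must verify that the label transport is well defined on the quotient, i.e. independent of the chosen representative $(G,p) \in \tilde{P}$ and of the chosen path representing each equivalence class, and that it respects the partiality of $\sigma$ and $\delta$. Everything else is formal composition of bijections and requires no new argument beyond the two cited propositions.
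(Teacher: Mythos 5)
Your proposal is correct and follows essentially the same route as the paper, which likewise obtains the theorem as an immediate corollary of Propositions \ref{prop:graphsaspaths} and \ref{prop:pathsaslanguages} via the identity $X\circ\tilde{P}=Id$. Your added care about the labelled extension goes slightly beyond the paper, which treats it as immediate since a labelling of an adjacency structure is by definition a labelling of its associated graph $G(X)$.
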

Therefore, ${\cal \tilde{P}}_{\Sigma,\Delta,\ports}$ and ${\cal X}_{\Sigma,\Delta,\ports}$ are the same set, namely the set of {\em generalized Cayley graphs}. Our generalization of CA will have its configurations in this set.

\noindent {\em Conventions.} The above theorem justifies the fact that
\begin{itemize}
\item[$\bullet$] a (labelled) pointed graph modulo $\tilde{P}(X)$ (resp. $\tilde{P}$),
\item[$\bullet$] a (labelled) adjacency structure $X$ (resp. $X(\tilde{P})$),
\item[$\bullet$] and their associated graph $G(X)$ (resp. $G(X(\tilde{P})$)
\end{itemize}
can be viewed as three presentations of the same mathematical object. Together with Definitions \ref{def:associatedstructure} and \ref{def:associatedgraph}, it also justifies the fact that the vertices of this mathematical object can be designated by
\begin{itemize}
\item[$\bullet$]  $\tilde{u}$ an equivalence class of $X$ (resp. $X(\tilde{P})$), i.e. the set of all paths leading to this vertex starting from $\tilde{\varepsilon}$,
\item[$\bullet$] or more directly by $u$ an element of an equivalence class $\tilde{u}$ of $X$ (resp. $X(\tilde{P})$), i.e. a particular path leading to this vertex starting from $\varepsilon$.
\end{itemize}
These two remarks lead to the following mathematical conventions, which we adopt for convenience. From now on:
\begin{itemize}
\item[$\bullet$] ${\cal \tilde{P}}_{\Sigma,\Delta,\ports}$ and ${\cal X}_{\Sigma,\Delta,\ports}$ will no longer be distinguished. The latter notation will be preferred. We shall speak of a ``generalized Cayley graph'' $X$ in ${\cal X}_{\Sigma,\Delta,\ports}$.
\item[$\bullet$] $\tilde{u}$ and $u$ will no longer be distinguished. The latter notation will be given the meaning of the former. I.e. we shall speak of a ``vertex'' $u$ in $V(X)$ (or simply $u\in X$.
\item[$\bullet$] It follows that `$\equiv$' and `$=$' will no longer be distinguished. The latter notation will be given the meaning of the former. I.e. we shall speak of ``equality of vertices'' $u=v$ (when strictly speaking we just have $\tilde{u}=\tilde{v}$).
\end{itemize} 
In any case, we will make sure that a rigorous meaning can always be recovered by placing tildes back.

\noindent {\em Discussion.} Generalized Cayley graphs extend Cayley graphs:
\begin{proposition}[Recovering Cayley graph]
Consider $H$ a group with law $*$ and generators the finite set $h=\{a,b,\ldots\}$. 
Let $\pi=\{a,a^{-1}\;|\;a\in h\}$ be the generators together with their inverses, $\overline{\pi}=\{(a,a^{-1}),a^{-1}a\;|\;a\in\pi\}$ the generators paired up with their inverses. Notice that $L=\overline{\pi}^*\subset\Pi^*=(\pi^2)^*$. Consider the morphism mapping:
\begin{itemize}
\item[$\bullet$] $a$ in $\pi$ to $\overline{a}=(a,a^{-1})$ in $\overline{\pi}$
\item[$\bullet$] the term $a*v$ in $H$ to $\overline{a}.\overline{v}$ in $L$
\item[$\bullet$] the equivalence $u=v$ over $H$ to the equivalence $\overline{u}\equiv_L\overline{v}$ over $L$.
\end{itemize}
Then, $X=\langle L,\equiv_L\rangle$ is an adjacency structure, and the generalized Cayley graph $X$ coincides with the Cayley graph of $H$. 
\end{proposition}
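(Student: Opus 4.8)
The plan is to work throughout with the monoid morphism $w \mapsto \overline{w}$ from $\pi^*$ into $\overline{\pi}^* = L$. This morphism is a bijection onto $L$: the map $a \mapsto \overline{a} = (a,a^{-1})$ sends distinct generators to distinct single letters of $\Pi = \pi^2$, and the first coordinate of such a letter already determines the generator, so $\overline{\pi}$ is a code and every word of $L$ factors uniquely into $\overline{\pi}$-letters. Under this identification $\equiv_L$ becomes exactly the kernel congruence of the evaluation map $\mathrm{ev}\colon \pi^* \to H$, $w \mapsto a_1 * \cdots * a_n$; indeed by definition $\overline{u} \equiv_L \overline{v}$ iff $u = v$ in $H$, that is $\mathrm{ev}(u) = \mathrm{ev}(v)$. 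First I would record the two structural facts used repeatedly below: $L = \overline{\pi}^*$ is closed under factors (being a free monoid on a sub-alphabet of $\Pi$), and a single letter $(a,b) \in \Pi$ lies in $\overline{\pi}$ iff $b = a^{-1}$, in which case $(a,b) = \overline{a}$ and its reverse $(b,a) = \overline{a^{-1}}$ also lies in $\overline{\pi}$.

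Then I would verify the four axioms of Definitions \ref{def:completeness}--\ref{def:adjacencystructures}. Condition (i) is immediate since $\overline{\pi}^*$ is closed under factors: if $u.v \in L$ then each letter of $u$ lies in $\overline{\pi}$, so $u \in L$. For (ii), the same closure gives $v \in L$ and hence $u'.v \in L$, while $u'.v \equiv_L u.v$ follows because $\mathrm{ev}$ is a homomorphism, so $\equiv_L$ is a congruence for concatenation. For (iii), the hypothesis $u.ab \in L$ forces $(a,b) = \overline{a} \in \overline{\pi}$, whence $ba = \overline{a^{-1}} \in \overline{\pi}$ and $u.ab.ba \in L$; appending $\overline{a}.\overline{a^{-1}}$ multiplies the evaluation by $a * a^{-1} = e$, so $u.ab.ba \equiv_L u$. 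The port-unambiguity condition of Definition \ref{def:adjacencystructures} holds for a purely structural reason independent of $\equiv_L$: if $u.ab, u'.ac \in L$ then $(a,b),(a,c) \in \overline{\pi}$, forcing $b = a^{-1} = c$. This settles that $X = \langle L,\equiv_L\rangle$ is an adjacency structure.

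For the coincidence with the Cayley graph I would appeal to Definition \ref{def:associatedgraph} and exhibit $G(X)$, vertex for vertex and edge for edge, as the Cayley graph of $H$ with its natural port labelling (leaving a vertex by port $a$, arriving by port $a^{-1}$). The vertices of $G(X)$ are the classes of $\equiv_L$, and the assignment $\tilde{\overline{w}} \mapsto \mathrm{ev}(w)$ is a bijection onto $H$: surjective because $h$ generates $H$, injective by the definition of $\equiv_L$. This is precisely the statement that $H$ is the quotient of the free monoid $\pi^*$ by the congruence $\equiv_L$. An edge $\{\tilde u \port a, \tilde v \port b\}$ is present iff $u.ab \in L$ and $u.ab \equiv_L v$; as above this forces $b = a^{-1}$ and identifies $\tilde v$ with the class of $u.\overline{a}$, i.e. with the element $\mathrm{ev}(u) * a$. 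Hence the edges are exactly the pairs $(g, g*a)$ carried by ports $a$ and $a^{-1}$, and the origin $\tilde\varepsilon$ maps to the identity $e$, so $G(X)$ is the Cayley graph of $H$.

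The only genuinely delicate point, the rest being bookkeeping, is to keep the two-level alphabet straight: a letter of $\Pi$ is itself a pair, an element of $\overline{\pi}$ is one such pair, and the quantifiers in Definitions \ref{def:completeness}--\ref{def:adjacencystructures} range over arbitrary words of $\Pi^*$ rather than merely over $L$. I would therefore state the unique-factorization and single-letter characterizations of $\overline{\pi}$ explicitly at the outset, since every axiom check silently relies on one of them; once these are in place the verification is mechanical and the whole group-theoretic content reduces to the observation that $\equiv_L$ is the congruence induced by evaluation into $H$.
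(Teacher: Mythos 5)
Your proof is correct, and for the part of the statement the paper actually proves it follows the same route: both you and the paper verify completeness conditions (i)--(iii) and the port-unambiguity condition directly from the definition of $L=\overline{\pi}^*$ and from the group laws (the congruence property of multiplication for (ii), $u*a*a^{-1}=u$ for (iii), and the shape of $\overline{\pi}$-letters for port-unambiguity). The differences are in rigor and scope. Where the paper disposes of each axiom in one line (``by definition of $L$''), you make explicit the two facts those one-liners silently rest on: that $\overline{\pi}$ is a code, so every word of $L$ factors uniquely into $\overline{\pi}$-letters and $L$ is closed under factors (needed because the quantifiers in Definitions \ref{def:completeness} and \ref{def:adjacencystructures} range over all of $\Pi^*$), and that $\equiv_L$ is the kernel congruence of the evaluation morphism $\pi^*\to H$. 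More significantly, the paper's proof stops after the axiom check and never argues the second half of the claim, namely that the generalized Cayley graph $X$ coincides with the Cayley graph of $H$; you do prove it, by exhibiting the bijection $\tilde{\overline{w}}\mapsto \mathrm{ev}(w)$ between the vertices of $G(X)$ (Definition \ref{def:associatedgraph}) and the elements of $H$, and matching the edges of $G(X)$ with exactly the pairs $(g,g*a)$ carried by ports $a$ and $a^{-1}$. So your proposal is strictly more complete than the paper's own proof, while agreeing with it on the shared part.
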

\begin{proof} All of the adjacency structures conditions are met:
\begin{itemize}
\item[(i)] $\overline{u}.\overline{v} \in L \Rightarrow \overline{u}\in L$ by definition of $L$. 
\item[(ii)] $\overline{u}\equiv_L \overline{u}' \Rightarrow \overline{u}'.\overline{v}\equiv_L \overline{u}.\overline{v}$, since $u=u' \Rightarrow u*v=u*v'$.
\item[(iii)] $\overline{u}.\overline{a} \Rightarrow \overline{u}.\overline{a}.\overline{a^{-1}}\equiv_L \overline{u}$, since $u*a*a^{-1}=u$.
\item[(-)] $(\overline{u}\equiv_L \overline{u}'\wedge \overline{u}.(a,b)\in L \wedge \overline{u}'.(a,c)\in L) \Rightarrow b=c=a^{-1}$ by definition of $L$. 
\end{itemize}
\end{proof}
One might have thought that any adjacency structure over the language $\langle L,\equiv_L\rangle$, with $L=\overline{\pi}^*$ is a Cayley graph, but this is not the case: the fact that $\equiv_L$ corresponds to group equality does matter in the above proposition. The Petersen graph, for instance, can be endowed with such an adjacency structure, while being famously not a Cayley graph \cite{Godsil}.\\
\begin{figure}\label{fig:petersen}
\begin{center}
\includegraphics[scale=1.2]{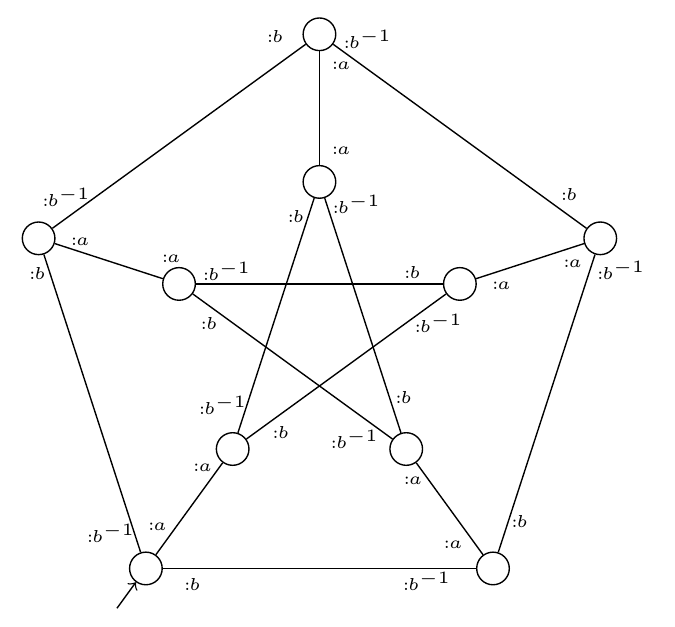}
\end{center}
\caption{{\em The Petersen graph as a generalized Cayley graph structure.}}
\end{figure}
But generalized Cayley graphs extend Cayley graphs in a much wider way than just including Petersen-like graphs. Indeed, whereas Cayley graphs are highly symmetric, generalized Cayley graphs can be {\em arbitrary connected graphs of bounded degree}. Still, this extension is an advantageous one, since all of the key features of Cayley graphs remain: We are able to name vertices relative to a point, through the word  describing the path from that point, and in fact the topology of the graph describes the equivalence structure upon words. We have a well-defined notion of translation, which is described as part of the basic operations upon these graphs in Section \ref{sec:Operations}. We can define a distance between theses graphs, which makes ${\cal X}_{\Sigma,\Delta,\pi}$ a compact metric space, as done in Section \ref{sec:Topology}.

\section{Basic operations}\label{sec:Operations}

\subsection{Operations on generalized Cayley graphs}\label{subsec:opgcg}

For a generalized Cayley graph $(G,p)$ non-modulo (see \cite{ArrighiCGD} for details):
\begin{itemize}
\item[$\bullet$] the neighbours of radius $r$ are just those vertices which can be reached in $r$ steps starting from the pointer $p$;
\item[$\bullet$] the disk of radius $r$, written $G^r_p$, is the subgraph induced by the neighbours of radius $r+1$, with labellings restricted to the neighbours of radius $r$ and the edges between them, and pointed at $p$.
\end{itemize}
Notice that the vertices of $G^r_p$ continue to have the same names as they used to have in $G$. For generalized Cayley graphs, on the other hand, the analogous operation is:
\begin{definition}[Disk]
Let $X\in {\cal X}_{\Sigma,\Delta,\ports}$ be a generalized Cayley graph and $(G,\varepsilon)$ its associated pointed graph. 
Let $X^r$ be $X(\widetilde{G^r_\varepsilon})$.
The generalized Cayley graph $X^r\in {\cal X}_{\Sigma,\Delta,\ports}$ is referred to as the {\em disk of radius $r$} of $X$. The {\em set of disks of radius $r$} with states $\Sigma,\Delta$ and ports $\ports$ is written ${\cal X}^r_{\Sigma,\Delta,\ports}$.
\end{definition}
A technical remark is that the vertices of $X^r$ no longer have quite the same names as they used to have in $X$. This is because, in a generalized Cayley graph, vertices are designated by those paths that lead to them, starting from the vertex $\varepsilon$, and there were many more such paths in $X$ than there are in its subgraph $X^r$. Still, it is clear that there is a natural inclusion $V(X^r)\incl V(X)$, meaning that $u\in X^r$ implies that there exists a unique $u'\in X$ such that $u\incl u'$. Thus, we will commonly say that a vertex of $u\in X^r$ belongs to $X$, even though technically we are referring to the corresponding vertex $u'$ of $X$. Similarly, we will commonly say that a vertex of $u'\in X$ belongs to $X^r$ when we actually mean that there is a unique vertex $u$ of $X^r$ such that $u\incl u'$.

\begin{figure}\label{fig:disk}
\includegraphics[scale=1.3]{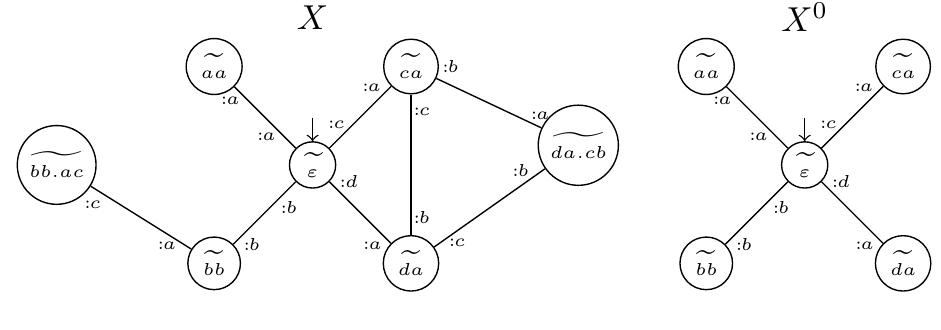}
\caption{{\em A generalized Cayley graph and its disk of radius $0$.} Notice that the equivalence classes describing vertices in $X^0$ are strict subsets of those in $X$, eventhough their shortest representative is the same. For instance the path $ca.cb$ is in $\tilde{da}$ in $X$ but is not a path in $X^0$, and thus does not belong to $\tilde{da}$ in $X^0$.}
\end{figure}

\begin{definition}[Size]\label{def:size}
Let $X\in {\cal X}_{\Sigma,\Delta,\ports}$ be a generalized Cayley graph. We say that a vertex $u\in X$ has size less or equal to $r+1$, and write $|u|\leq r+1$, if and only if $u\in X^r$.  We denote $V({\cal X}^r_\ports)=\bigcup_{X\in {\cal X}^r_\pi}V(X)$.
\end{definition}

\noindent It will help to have a notation for the graph where vertices are named relatively to some other pointer vertex $u$.
\begin{definition}[Shift]\label{def:shift}
Let $X\in {\cal X}_{\Sigma,\Delta,\ports}$ be a generalized Cayley graph and $(G,\varepsilon)$ its associated pointed graph. 
Consider $u\in X$ or $X^r$ for some $r$, and consider the pointed graph $(G,u)$, which is the same as $(G,\varepsilon)$ but with a different pointer. Let $X_u$ be $X\pa{\tili{(G,u)}}$. The generalized Cayley graph $X_u$ is referred to as {\em $X$ shifted by $u$}.\\ 
\end{definition}
\noindent Note that it could be said that $X$ and its shifted version $X_u$ are isomorphic, since both graphs are the same except for vertex naming conventions, but this is a distinct isomorphism from the isomorphism of Def. \ref{def:isomorphism}, which specifically kept the pointer unchanged.

\noindent The composition of a shift, and {\em then} a restriction, applied on $X$, will simply be written $X_u^r$. Whilst this is the analogous operation to $G^r_u$ over pointed graphs non-modulo, notice that the shift-by-$u$ completely changes the names of the vertices of $X_u^r$. As the naming has become relative to $u$, the disk $X_u^r$ holds no information about its prior location, $u$.

\noindent We may also want to designate a vertex $v$ by those paths that lead to the vertex $u$ relative to $\varepsilon$, followed by those paths that lead to $v$ relative to $u$. The following definition of concatenation coincides with the one that is induced by the concatenation of words belonging to the classes $u$ and $v$:
\begin{definition}[Concatenation]\label{def:concatenation}
Let $X\in {\cal X}_{\ports}$ be a generalized Cayley graph and $(G,\varepsilon)$ its associated pointed graph.
Consider $u\in X$ and $v\in X_u$ or $X_u^r$ for some $r$.
Let $(G',\varepsilon)$ be the associated pointed graph of $(X_u)_v$, $R$ be an isomorphism such that $G'=RG$, and $u.v$ be $R^{-1}(\varepsilon)$.
The vertex $u.v\in X$ is referred to as {\em $u$ concatenated with $v$}.
\end{definition}

\noindent According to Definition~\ref{def:shift}, $G'$ and $G$ are isomorphic.  Moreover, the restriction of $R^{-1}$ to $V(G')$ is uniquely determined; hence the definition is sound.  

\noindent It also helps to have a notation for the paths to $\varepsilon$ relative to $u$.
\begin{definition}[Inverse]
Let $X\in {\cal X}_{\ports}$ be a generalized Cayley graph and $(G,\varepsilon)$ its associated pointed graph.
Consider $u\in X$.
Let $(G',\varepsilon)$ be the associated pointed graph of $X_u$, $R$ be an isomorphism such that $G'=RG$, and $\overline{u}$ be $R(\varepsilon)$.
The vertex $\overline{u}\in X_u$ is referred to as the {\em inverse of $u$}.
\end{definition}

\noindent Notice the following easy facts: $(X_u)_v=X_{u.v}$, $u.\overline{u}=\varepsilon$. Notice also that the isomorphism $R$ such that $G(X_u)=RG(X)$ maps $v$ to $\overline{u}.v$.  
This last property suggests that we may define shifts upon graphs (non-modulo) as a certain class of isomorphisms.  In order to formalize this notion within the set of graphs without appealing to graphs modulo, we will need that the vertices of our graphs non-modulo be of a particular form.

\subsection{Operations on graphs}\label{subsec:opg}

In Section \ref{sec:AssociatedGraphs} we said that a graph $G\in {\cal G}_{\ports}$  would have vertex names in $V$. But now we shall allow vertices to have names in disjoint subsets of $V.S$, with $S=\{\varepsilon,1,2,\ldots,b\}$ a finite set of suffixes. For instance, given some generalized Cayley graph $X$, having vertices $u,v$ in $V(X)$, we may build some graph $G$ having vertices $\{v\}$, $\{u.1\}$, $\{u.3,v.1\}$ \ldots i.e. subsets of $V(X).S$. Later, $\{u.1\}$ will be interpreted as the vertex which is `the first successor of $u$', $\{u.3,v.1\}$ as the vertex which is `the first successor of $v$ and the third successor of $u$', $\{v\}$ as the vertex which is `the continuation of $v$'. Disjointness is just to keep things tidy: one cannot have a vertex which is the first successor of $u$ ($\{u.1\}$, say) coexisting with another which is the `the first successor of $u$ and the second successor of $u$' ($\{u.1,v.2\}$, say) --- although some other convention could have been used. Still, some form of suffixes is necessary in order to provide just the little, extra naming space that is needed in order to create new vertices.

\begin{definition}[Shift isomorphism]
Let $X\in {\cal X}_{\ports}$ be a generalized Cayley graph.
Let $G\in {\cal G}_{\ports}$ be a graph that has vertices that are disjoint subsets of $V(X).S$ or $V(X^r).S$ for some $r$. 
Consider $u \in X$. 
Let $R$ be the isomorphism from $V(X).S$ to $V(X_u).S$ mapping $v.z\mapsto \overline{u}.v.z$, for any $v\in V(X)$ or $V(X^r)$, $z\in S$. Extend this bijection pointwise to act over subsets of $V(X).S$, and let $\overline{u}.G$ to be $RG$.
The graph $\overline{u}.G$ has vertices that are disjoint subsets of $V(X_u).S$, it is referred to as {\em $G$ shifted by $u$}.
The definition extends to labelled graphs.
\end{definition}

\noindent The next two definitions are standard, see \cite{BFHAmalgamation,LoweAlgebraic} and \cite{ArrighiCGD}, although here again the vertices of $G$ are given names in disjoint subsets of $V(X).S$ for some $X$. Basically, we need a notion of {\em union} of graphs, and for this purpose we need a notion of {\em consistency} between the operands of the union: 
\begin{definition}[Consistency]\label{def:consistency}
Let $X\in {\cal X}_{\ports}$ be a generalized Cayley graph. Let $G$ be a labelled graph $(G,\sigma, \delta)$, and $G'$ be a labelled graph $(G',\sigma', \delta')$, each one having vertices that are pairwise disjoint subsets of $V(X).S$.  The graphs are said to be {\em consistent} if and only if:
\begin{itemize}
\item[(i)] $\forall x\in G\,\forall x'\in G'\quad x\cap x'\neq\emptyset \Rightarrow x=x'$,
\item[(ii)] $\forall x,y\in G\,\forall x',y'\in G'\,\forall a,a',b,b' \in \ports\quad (\{x\port a,y\port b\} \in E(G) \wedge \{x'\port a',y'\port b'\} \in E(G') \wedge x=x' \wedge a=a') \Rightarrow (b=b' \wedge y=y')$,
\item[(iii)] $\forall x,y\in G\,\forall x',y'\in G'\,\forall a,b \in \ports \quad x=x' \Rightarrow \delta(\{x\port a,y\port b\})=\delta'(\{x'\port a,y'\port b\})$ when both are defined,
\item[(iv)] $\forall x\in G\,\forall x'\in G'\quad  x=x' \Rightarrow \sigma(x)=\sigma'(x')$ when both are defined.
\end{itemize}
They are said to be {\em trivially consistent} if and only if for all $x\in G$, $x'\in G'$ we have $x\cap x'=\emptyset$.
\end{definition}
The consistency conditions aim at making sure that both graphs ``do not disagree''. Indeed: $(iv)$ means that ``if $G$ says that vertex $x$ has label $\sigma(x)$, $G'$ should either agree or have no label for $x$''; $(iii)$ means that ``if $G$ says that edge $e$ has label $\delta(e)$, $G'$ should either agree or have no label for $e$''; $(ii)$ means that ``if $G$ says that starting from vertex $x$ and following port $a$ leads to $y$ via port $b$, $G'$ should either agree or have no edge on port $x\port a$''.\\
Condition $(i)$ is in the same spirit: it requires that $G$ and $G'$, if they have a vertex in common, then they must fully agree on its name. Remember that vertices of $G$ and $G'$ are disjoint subsets of $V(X).S$. If one wishes to take the union of $G$ and $G'$, one has to enforce that the vertex names will still be disjoint subsets of $V(X).S$.\\ 
Trivial consistency arises when $G$ and $G'$ have no vertex in common: thus, they cannot disagree on any of the above.

\begin{definition}[Union]\label{def:union}
Let $X\in {\cal X}_{\ports}$ be a generalized Cayley graph. Let $G$ be a labelled graph $(G,\sigma, \delta)$, and $G'$ be a labelled graph $(G',\sigma', \delta')$, each one having vertices that are pairwise disjoint subsets of $V(X).S$.
Whenever they are consistent, their {\em union} is defined. The resulting graph $G\cup G'$ is the labelled graph with vertices $V(G)\cup V(G')$, edges $E(G)\cup E(G')$, labels that are the union of the labels of $G$ and $G'$.
\end{definition}

Finally, recall that for a pointed graph $(G,p)$ non-modulo $G^r_p$, is the subgraph induced by the neighbours of radius $r+1$, with labellings restricted to the neighbours of radius $r$ and the edges between them, and pointed at $p$ \cite{ArrighiCGD}.

\section{Generalized Cayley graphs: topological properties}\label{sec:Topology}

Having a well-defined notion of disks allows us to define a topology upon ${\cal X}_{\Sigma,\Delta,\ports}$, which is the natural generalization of the well-studied Cantor metric upon CA configurations \cite{Hedlund}.
\begin{definition}[Gromov-Hausdorff-Cantor metrics]\label{def:metric}
Consider the function
\begin{align*}
d:{\cal X}_{\Sigma, \Delta, \ports}\times{\cal X}_{\Sigma, \Delta, \ports} &\longrightarrow {\mathbb R}^+\\
(X,Y)&\mapsto d(X,Y)=0\quad \textrm{if }X=Y\\
(X,Y)&\mapsto d(X,Y)=1/2^r\quad \textrm{otherwise}
\end{align*}
where $r$ is the minimal radius such that $X^r \neq Y^r$.\\
The function $d(.,.)$ is such that for $\epsilon>0$ we have (with $r=\lfloor - \log_2(\epsilon)\rfloor$):
$$d(X,Y)<\epsilon \Leftrightarrow X^r = Y^r.$$
It defines an ultrametric distance.\\
\end{definition}
\VL{
\noindent {\em Soundness:}
$[$Nonnegativity, symmetry, identity of indiscernibles$]$ are
obvious.\\
$[$Equivalence$]$
\begin{align*}
d(X,Y)<\epsilon &\Leftrightarrow d(X,Y)=1/2^k \textrm{ with }k\in\mathbb{N}\wedge 1/2^k<\epsilon\\ 
&\Leftrightarrow k=\min\{r\in\mathbb{N}\;|\;X^r\neq Y^r\}\wedge 1/2^k<\epsilon\\
&\Leftrightarrow_{r=k-1}\,X^r=Y^r\textrm{ with }r\in\mathbb{N}\wedge 1/2^{r+1}<\epsilon\\
&\Leftrightarrow X^r=Y^r \textrm{ with }r=\lfloor-\log_2(\epsilon)\rfloor.
\end{align*}
$[$Ultrametricity$]$ Consider $k$ such
that  $1/2^k=d(X,Z)$ and $l$ such that  $1/2^l= d(X,Y)$.
By definition of the metric $X,Z$ differ only after index $k$ 
and $X,Y$ differ only after index $l$. Suppose
$k\leq l$ so that $Y,Z$ differ only after index $k$. But then
$d(Y,Z)=1/2^k$ which is $d(X,Z)$.\\
$[$Triangle inequality$]$ is obvious from the ultrametricity.\smallskip\\
}
\noindent The fact that generalized Cayley graphs are pointed graphs modulo, i.e. the fact that they have no ``vertex name degree of freedom'' is key to proving the following property. Indeed, compactness crucially relies on the set being ``finite-branching'', meaning that the set of possible generalized Cayley graphs, as one progressively enlarges the radius of a disk, remains finite. This does not hold for usual graphs.
\begin{lemma}[Compactness] \label{lem:compactness}
$({\cal X}_{\Sigma, \Delta, \ports},d)$ is a compact metric space, i.e. every sequence admits a converging subsequence.
\end{lemma}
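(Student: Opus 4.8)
The plan is to prove compactness directly via the standard diagonal (König's lemma style) argument adapted to this ultrametric setting, exploiting the fact that the metric is defined through a nested family of disks $X^r$. The crucial structural input is that for each radius $r$, the set of disks $\mathcal{X}^r_{\Sigma,\Delta,\ports}$ is \emph{finite}. This is where the ``pointed graph modulo'' nature is essential: a disk $X^r$ is a connected graph modulo isomorphism, whose vertices are at distance at most $r+1$ from the pointer $\varepsilon$. Since the degree is bounded by $|\ports|$, the number of vertices of such a disk is bounded (by roughly $|\ports|^{r+1}$), and since there are finitely many labels in $\Sigma$ and $\Delta$ and finitely many ways to wire up a bounded number of ports, there are only finitely many such disks up to the equivalence $\approx$. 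I would state and justify this finiteness claim first, as a preliminary observation, since it is the combinatorial heart of the argument.

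Next I would run the diagonal extraction. Given an arbitrary sequence $(X_n)_{n\in\N}$ in $\mathcal{X}_{\Sigma,\Delta,\ports}$, consider the sequence of radius-$0$ disks $(X_n^0)$. Since $\mathcal{X}^0_{\Sigma,\Delta,\ports}$ is finite, by the pigeonhole principle some value $D_0$ is attained infinitely often; extract a subsequence $(X_n)_{n\in I_0}$ all of whose radius-$0$ disks equal $D_0$. Inductively, from the subsequence indexed by $I_{r-1}$, look at the radius-$r$ disks: again there are finitely many possibilities, so some $D_r$ occurs infinitely often, and I extract a further subsequence $I_r\subseteq I_{r-1}$ on which $X_n^r=D_r$ for all $n\in I_r$. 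By construction these disks are coherent in the sense that $D_r$ restricted to radius $r-1$ agrees with $D_{r-1}$ (because on $I_r$ we have both $X_n^r = D_r$ and $X_n^{r-1}=D_{r-1}$, and taking the radius-$(r-1)$ disk of a radius-$r$ disk recovers the radius-$(r-1)$ disk). Finally extract the diagonal subsequence $(X_{n_k})$ with $n_k$ the $k$-th element of $I_k$; then for every fixed $r$ and all $k\geq r$ we have $X_{n_k}^r = D_r$.

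It then remains to produce the limit and verify convergence. Using Theorem~\ref{th:SP} and Proposition~\ref{prop:pathsaslanguages}, I would assemble the coherent family $(D_r)_r$ into a single generalized Cayley graph $X_\infty$ whose disk of radius $r$ is $D_r$ for every $r$; concretely one takes the union of the languages (and equivalences, and labels) of the coherent disks and checks that the resulting $(L,\equiv_L)$ is an adjacency structure, so that $X_\infty = \langle L,\equiv_L\rangle$ is a legitimate element of $\mathcal{X}_{\Sigma,\Delta,\ports}$ with $X_\infty^r = D_r$. Then for any $\epsilon>0$, taking $r=\lfloor -\log_2(\epsilon)\rfloor$, we have $X_{n_k}^r = D_r = X_\infty^r$ for all large $k$, whence $d(X_{n_k},X_\infty)<\epsilon$ by Definition~\ref{def:metric}; thus $X_{n_k}\to X_\infty$.

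The main obstacle I anticipate is not the extraction, which is routine, but the verification that the coherent family of finite disks genuinely glues into a \emph{valid} adjacency structure (completeness conditions (i)--(iii) and port-unambiguity). The coherence $D_r|_{r-1}=D_{r-1}$ must be exploited carefully to show that the limiting equivalence on the union language is well-defined and satisfies the adjacency conditions, and that the labelling is consistent across all radii. This gluing step is where I would spend the real effort, as it requires checking that local agreement on every finite disk propagates to a globally well-defined infinite object.
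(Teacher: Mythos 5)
Your proof is correct and follows essentially the same route as the paper's: a K\"onig's-Lemma/pigeonhole extraction exploiting the finiteness of the set of disks of each radius, followed by a diagonal subsequence. The only difference is one of detail --- the paper simply asserts that ``the limit is the unique graph $X'$ having disks $X'^k=X^k$ for all $k$'', whereas you explicitly propose to build that limit by gluing the coherent disks into an adjacency structure via Theorem~\ref{th:SP}, a verification the paper leaves implicit.
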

\VL{\begin{proof} This is essentially K\"onig's Lemma.  Let us consider an infinite sequence of graphs $(X(n))_{n\in \mathbb{N}}$. Because $\Sigma$ and $\Delta$ are finite, and there is an infinity of elements of $(X(n))$, there must exist a graph of radius zero $X^0$ such that there is an infinity of elements of $(X(n))$ fulfilling $X(n)^0=X^0$. Choose one of them to be $X(n_0)$, i.e. $X(n_0)^0=X^0$. Now iterate: because the degree of the graph is bounded by $\pi$, and because $\Sigma$ and $\Delta$ are finite but there is an infinity of elements of $(X(n))$ having the above property, there must exist a pointed graph of radius one $X^1$ such that $(X^1)^0=X^0$ and such that there is an infinity of elements of $(X(n))$ having $X(n)^1=X^1$.  Choose one of them as $X(n_1)$, i.e. $X(n_1)^1=X^1$.  Etc.  The limit is the unique graph $X'$ having disks $X'^k=X^k$ for all $k$.\qed
\end{proof}}

\noindent Recall the difference in quantifiers between the continuity of a function $F$ over a metric space $({\cal X},d)$:
$$\forall X\in{\cal X}\,\forall \epsilon >0 \,\exists \eta >0 \,\forall Y\in{\cal X}, \quad d(X,Y)<\eta\Rightarrow d(F(X),F(Y))<\epsilon,$$ 
and its uniform continuity:
$$\forall \epsilon >0 \,\exists \eta >0 \,\forall X,Y\in{\cal X}, \quad d(X,Y)<\eta\Rightarrow d(F(X),F(Y))<\epsilon.$$ 
Uniform continuity is the physically relevant notion, as it captures the fact that $F$ does not propagate information too fast. In a compact setting, it is equivalent to simple continuity, which is easier to check and is the mathematically standard notion. This is the content of Heine's Theorem, a well-known result in general topology \cite{Fedorchuk}: given two ${\cal X}$ and ${\cal Y}$ be metric spaces and $F:{\cal X}\longrightarrow{\cal Y}$ continuous, if ${\cal X}$ is compact, then $F$ is uniformly continuous.

The implications of these topological notions for Cellular Automata were first studied in \cite{Hedlund}, with self-contained elementary proofs available in \cite{KariNotes}. For Cellular Automata over Cayley graphs a complete reference is \cite{Coornaert}. For Causal Graph Dynamics \cite{ArrighiCGD}, these implications had to be reproven by hand, due to the lack of a clear topology in the set of graphs that was considered. Here we are able rely on the topology of generalized Cayley graphs and reuse Heine's Theorem out-of-the-box, which makes the setting of generalized Cayley graphs a very attractive one in order to generalize CA.

\section{Causality and Localizability}\label{sec:Causality}
 
\noindent {\em Causality.} The notion of causality we will propose extends the known mathematical definition of Cellular Automata over grids and Cayley graphs. The extension will be a strict one for two reasons: not only the graphs become arbitrary, but they can also vary in time. 

The main difficulty we encountered when elaborating an axiomatic definition of causality from ${\cal X}_{\Sigma,\Delta,\ports}$ to ${\cal X}_{\Sigma,\Delta,\ports}$, was the need to establish a correspondence between the vertices of a generalized Cayley graph $X$, and those of its image $F(X)$. Indeed, on the one hand it is important to know that a given $u\in X$ has become $u'\in F(X)$, e.g. in order to express shift-invariance $F(X_u)=F(X)_{u'}$. But on the other hand since $u'$ is named relative to $\varepsilon$, its determination requires a global knowledge of $X$.

The following analogy provides a useful way of tackling this issue. Say that we were able to place a white stone on the vertex $u\in X$ that we wish to follow across evolution $F$. Later, by observing that the white stone is found at $u'\in F(X)$, we would be able to conclude that $u$ has become $u'$. This way of grasping the correspondence between an image vertex and its antecedent vertex is a local, operational notion of an observer moving across the dynamics. 

\begin{definition}[Dynamics]\label{def:dynamicsmodulo}
A dynamics $(F,R_{\bullet})$ is given by
\begin{itemize}
\item[$\bullet$] a function $F:{\cal X}_{\Sigma,\Delta,\ports}\to{\cal X}_{\Sigma,\Delta,\ports}$;
\item[$\bullet$] a map $R_{\bullet}$, with $R_{\bullet}: X\mapsto R_X$ and $R_X: V(X) \to V(F(X))$.
\end{itemize}
For all $X$, the function $R_X$ can be pointwise extended to sets, i.e. $R_X:{\cal P}(V(X)) \to {\cal P}(V(F(X)))$ maps $S$ to $R_X(S)=\{R_X(u)\;|\;u\in S\}$.
\end{definition}
The intuition is that $R_X$ indicates which vertices $\{u',v',\ldots\}=R_X(\{u,v,\ldots\})\subseteq V(F(X))$ will end up being marked as a consequence of $\{u,v,...\in X\}\subseteq V(X)$ being marked. Now, clearly, the set $\{(X,{\cal P}(V(X)))\;|\;X\in {\cal X}_{\Sigma,\Delta,\ports}\}$ is isomorphic to ${\cal X}_{\Sigma',\Delta,\ports}$ with $\Sigma'=\Sigma\times\{0,1\}$. Hence, we can define the function $F'$ that maps $(X,S)\cong X'\in {\cal X}_{\Sigma',\Delta,\ports}$ to $(F(X),R_X(S))\cong F'(X')\in {\cal X}_{\Sigma',\Delta,\ports}$, and think of a dynamics as just this function $F':{\cal X}_{\Sigma',\Delta,\ports}\to {\cal X}_{\Sigma',\Delta,\ports} $. This alternative formalism will turn out to be very useful.
\begin{definition}[Shift-invariance]
A dynamics $(F,R_{\bullet})$ is said to be {\em shift-invariant} if and only if for every $X$ and $u\in X$, $v\in X_u$, 
\begin{itemize}
\item[$\bullet$] $F(X_u)=F(X)_{R_X(u)}$
\item[$\bullet$] $R_X(u.v)=R_X(u).R_{X_u}(v)$.
\end{itemize}
\end{definition}
The second condition expresses the shift-invariance of $R_{\bullet}$. Notice that  $R_X(\varepsilon)=R_X(\varepsilon).R_X(\varepsilon)$; hence $R_X(\varepsilon)=\varepsilon$.\\
In the $F':{\cal X}_{\Sigma',\Delta,\ports}\to{\cal X}_{\Sigma',\Delta,\ports}$ formalism, the two above conditions are equivalent to just one: $F'(X_u)=F'(X)_{R_X(u)}$.  

\begin{definition}[Continuity]\label{def:continuitymodulo}
A dynamics $(F,R_{\bullet})$ is said to be {\em continuous} if and only if:
\begin{itemize}
\item[$\bullet$] $F:{\cal X}_{\Sigma,\Delta,\ports}\to{\cal X}_{\Sigma,\Delta,\ports}$ is continuous,
\item[$\bullet$] For all $X$, for all $m$, there exists $n$ such that for all $X'$, $X'^n=X^n$ implies $\dom\,R_{X'}^m\subseteq V(X'^n)$, $\dom\,R_{X}^m\subseteq  V(X^n)$ and $R_{X'}^m=R_{X}^m$.
\end{itemize}
where $R_{X}^m$ denotes the partial map obtained as the restriction of $R_X$ to the codomain $F(X)^m$, using the natural inclusion of $F(X)^m$ into $F(X)$.
\end{definition}
The second condition expresses the continuity of $R_{\bullet}$. It can be reinforced into uniform continuity: for all $m$, there exists $n$ such that for all $X$, $X'$, $X'^n=X^n$ implies $R_{X'}^m=R_{X}^m$.\\
Indeed, in the $F':{\cal X}_{\Sigma',\Delta,\ports}\to{\cal X}_{\Sigma',\Delta,\ports}$ formalism, the two above conditions are equivalent to just one: $F'$ continuous. But since continuity implies uniform continuity upon the compact space ${\cal X}_{\Sigma',\Delta,\ports}$, it follows that $F'$ is uniformly continuous, and thus the reinforced second condition.\\
We need one third, last condition:
\begin{definition}[Boundedness]\label{def:boundednessmodulo}
A dynamics $(F,R_{\bullet})$ from ${\cal X}_{\Sigma,\Delta,\ports}$ to ${\cal X}_{\Sigma,\Delta,\ports}$ is said to be {\em bounded} if and only if there exists a bound $b$ such that for all $X$, for all $w'\in F(X)$, there exist $u'\in \im R_X$ and $v'\in F(X)_{u'}^b$ 
such that $w'=u'.v'$.
\end{definition}

\noindent The following is our main definition:
\begin{definition}[Causal dynamics]\label{def:causal}
A dynamics is {\em causal} if it is shift-invariant, continuous and bounded.
\end{definition}

An example of causal dynamics is the inflating grid dynamics illustrated in Figure \ref{fig:inflatingglobal}. In the inflating grid dynamics each vertex gives birth to four distinct vertices, such that the structure of the initial graph is preserved, but inflated. The graph has maximal degree $4$, and the set of ports is $\pi=\{a,b,c,d\}$, vertices and edges are unlabelled.
\begin{figure}
\begin{tabular}{ccc}
\raisebox{1.45cm}{\includegraphics[scale=0.2]{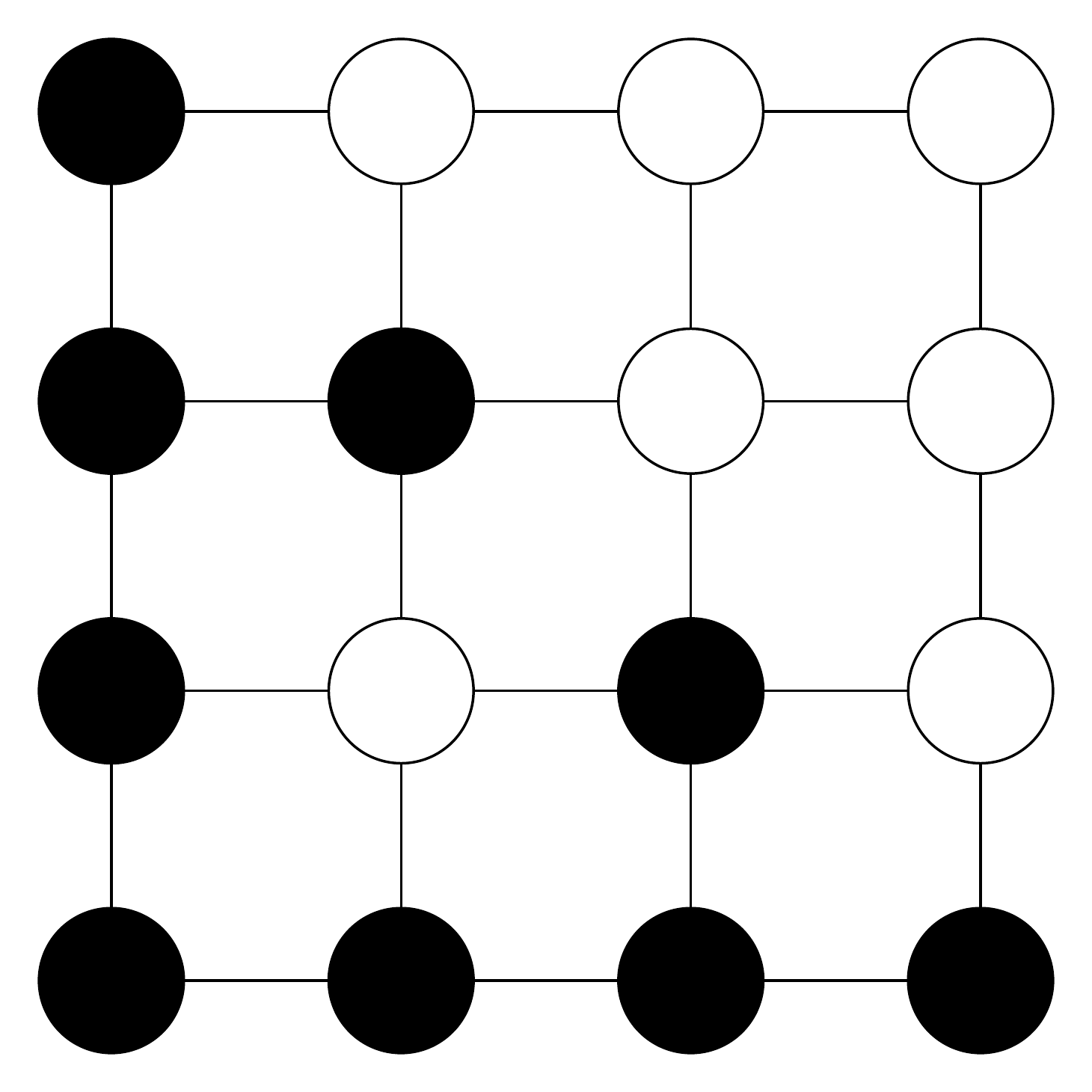}} ~~~~~~~&
\raisebox{2.75cm}{$\mapsto$}  ~~~~~~~&
\includegraphics[scale=0.4]{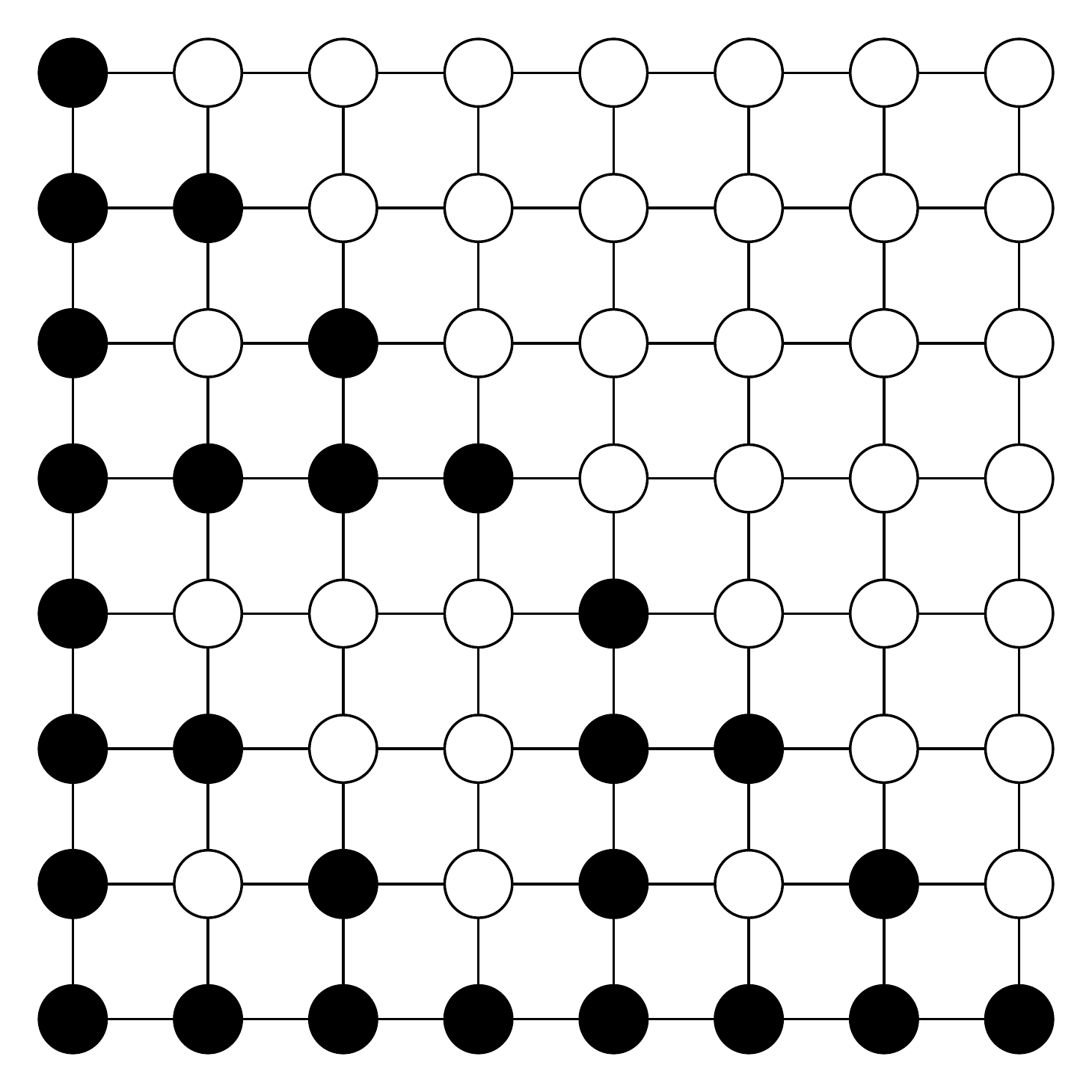}
\end{tabular}

\caption{\label{fig:inflatingglobal} {\em The inflating grid dynamics.} Each vertex splits into 4 vertices. The structure of the grid is preserved. Ports are omitted here.}
\end{figure}

\begin{lemma}[Bounded inflation]
\label{lem:connectivitymodulo}
Consider a causal dynamics $F$ from ${\cal X}_{\Sigma, \Delta, \ports}$ to ${\cal X}_{\Sigma, \Delta, \ports}$. There exists a bound $b$ such that for all $X$ and $u\in X^r$, we have $|R_{X}(u)|\leq (r+1)b$.
\end{lemma}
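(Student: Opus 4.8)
The plan is to reduce the claim to a uniform bound on the image size of \emph{single steps}, and then to recover the linear factor $(r+1)$ by decomposing $u$ along a shortest path and summing. Fix $X$ and $u\in X^r$, so that $|u|\leq r+1$ and there is a path $u=s_1.s_2.\cdots.s_m$ with each $s_i\in\Pi$ a single step and $m=|u|\leq r+1$. Writing $u_0=\varepsilon$ and $u_i=s_1.\cdots.s_i$, the second clause of shift-invariance applied $m$ times yields $R_X(u)=R_{X_{u_0}}(s_1).R_{X_{u_1}}(s_2).\cdots.R_{X_{u_{m-1}}}(s_m)$, where $X_{u_0}=X$ and each $s_i$ is a valid single step of $X_{u_{i-1}}$ (it is the step leading from $u_{i-1}$ to $u_i$).

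The heart of the argument is the following uniform single-step bound: there is a $b$ such that for every generalized Cayley graph $Y$ and every single step $s\in\Pi$ valid in $Y$, one has $|R_Y(s)|\leq b$. I would prove it by a compactness argument. Fix $s\in\Pi$; the set of $Y$ admitting $s$ as a valid single step is clopen (it is determined by the radius-zero disk $Y^0$), hence compact by Lemma~\ref{lem:compactness}. On this set consider $g_s\colon Y\mapsto|R_Y(s)|$, which is finite-valued since every vertex of a generalized Cayley graph has finite size. The continuity of $R_\bullet$ makes $g_s$ locally constant: given $Y$, put $M=|R_Y(s)|$ and apply continuity with $m=M$ to get a radius $n$ with $Y'^n=Y^n\Rightarrow R_Y^M=R_{Y'}^M$; since $s\in\dom R_Y^M$, any $Y'$ with $Y'^n=Y^n$ then satisfies $R_{Y'}(s)=R_Y(s)$, so $g_s(Y')=M$. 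A locally constant function on a compact space has finite image, so $g_s$ is bounded, and the maximum of these bounds over the finite set $\Pi$ gives the uniform $b$. (Equivalently, an unbounded sequence $|R_{Y_k}(s)|\to\infty$ would, by Lemma~\ref{lem:compactness}, have a convergent subsequence whose limit contradicts continuity there.) This is where the topology of generalized Cayley graphs is genuinely used, and it is the main obstacle.

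It remains to combine the two ingredients, using that size is subadditive under concatenation: if $w\in Z$ and $w'\in Z_w$ then $|w.w'|_Z\leq|w|_Z+|w'|_{Z_w}$, since concatenating a shortest path to $w$ with a shortest path to $w'$ relative to $w$ is a path to $w.w'$. Here shift-invariance ensures $R_{X_{u_{m-1}}}(s_m)$ lives in $F(X_{u_{m-1}})=F(X)_{R_X(u_{m-1})}$, so the concatenation $R_X(u)=R_X(u_{m-1}).R_{X_{u_{m-1}}}(s_m)$ is legitimate. Inducting on $m$, with base $|R_X(\varepsilon)|=|\varepsilon|=0$ and the single-step bound $|R_{X_{u_{i-1}}}(s_i)|\leq b$ at each step, gives $|R_X(u)|\leq\sum_{i=1}^m|R_{X_{u_{i-1}}}(s_i)|\leq m\,b\leq(r+1)\,b$, which is the claim. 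Note that only shift-invariance and continuity (through compactness) are used; the boundedness clause of causality does not enter.
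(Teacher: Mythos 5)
Your proof is correct and follows essentially the same route as the paper's: a compactness argument (the set of graphs admitting a given single step $s\in\Pi$ is closed, hence compact, and continuity of $R_\bullet$ makes $Y\mapsto|R_Y(s)|$ locally constant, hence bounded), followed by the decomposition of $u$ into single steps using shift-invariance and the triangle inequality. You have merely spelled out the details that the paper compresses into ``the result then follows from the triangle inequality and shift-invariance.''
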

\begin{proof}
Let $ac\in\Pi$, and let $E$ the subset of ${\cal X}_{\Sigma, \Delta, \ports}$ of those $X$ such that $ac\in X$. $E$ is closed --- any sequence of elements of $E$ converging in  ${\cal X}_{\Sigma, \Delta, \ports}$ converges in $E$ --- and ${\cal X}_{\Sigma, \Delta, \ports}$ is compact, therefore $E$ is compact.  By continuity modulo, the function $X\mapsto |R_X(ac)|$ is continuous from $E$ to $\N$; since $E$ is compact, it must be bounded.  The result then follows from the triangle inequality and shift-invariance.
\end{proof}

\noindent {\em Localizability.} The notion of localizability of a function $F$ captures the exact same idea as the constructive definition of a Cellular Automata, namely that $F$ arises as a single local rule $f$ applied synchronously and homogeneously across the input graph. \\
The general idea is that the local rule $f$ looks at part of the generalized Cayley graph $X$ (a disk $X^r$) and produces a piece of graph $G=f(X^r)$. The same is done synchronously at every location $u\in X$ producing pieces of graph $G'=f(X_u^r)$. The produced pieces must be consistent (see Subsection \ref{subsec:opg}) so that we take their union. Their union is a graph, but taking its modulo leads to a generalized Cayley graph $F(X)$.\\ We now formalize this idea. First, we must make sure that a local rule is an object that adopts the same naming conventions for vertices as those of the basic graph operations of Subsection \ref{subsec:opg}.
\begin{definition}[Dynamics non-modulo]\label{def:dynamics}
A function $f$ from ${\cal X}^r_{\Sigma,\Delta,\ports}$ to ${\cal G}_{\Sigma,\Delta,\ports}$ is said to be a {\em dynamics} if and only if for all $X$ the vertices of $f(X)$ are disjoint subsets of $V(X).S$, and $\varepsilon\in f(X)$.
\end{definition}
Intuitively, the integer $z\in S$ stands for the `successor number $z$'. Hence the vertices designated by $\{1\},\{2\}\ldots$ are successors of the vertex $\varepsilon$, whereas $\{\varepsilon\}$ is its `continuation'. The vertices designated by $\{ab.1\},\{ab.2\}\ldots$ are successors of its neighbour $ab\in X^r$. A vertex named $\{1,ab.3\}$ is understood to be both the first successor of vertex $\varepsilon$ and the third successor of vertex $ab$. Recall also that $\varepsilon$, just like $ab$, are not just words but entire equivalence classes of these words, i.e. elements of $V(X)$.\\
Second, we disallow local rules that would suddenly produce an infinite graph.
\begin{definition}[Boundedness non-modulo]
A function $f$ from ${\cal X}^r_{\Sigma,\Delta,\ports}$ to ${\cal G}_{\Sigma,\Delta,\ports}$ is said to be {\em bounded} if and only if for all $X$, the graph $f(X)$ is finite.
\end{definition}
Third, we make sure that the pieces of graphs that are produced by the local rule are consistent with one another.
\begin{definition}[Local rule]
A function $f$ from ${\cal X}^r_{\Sigma, \Delta, \ports}$ to ${\cal G}_{\Sigma, \Delta, \ports}$ is a {\em local rule} if and only if it is a bounded dynamics and
\begin{itemize}
\item[$\bullet$] For any disk $X^{r+1}$ and any $u\in X^{0}$ we have that $f(X^r)$ and $u.f(X_u^r)$ are non-trivially consistent.
\item[$\bullet$] For any disk $X^{3r+2}$ and any $u\in X^{2r+1}$ we have that $f(X^r)$ and $u.f(X_u^r)$ are consistent.
\end{itemize}
\end{definition}
It is clear that we do not need to formulate any consistency condition beyond $u\in X^{2r+1}$, because $f(X^r)$ and $u.f(X_u^r)$ then become trivially consistent, as they share nothing in common, see Figure \ref{fig:consistency}. The only subtlety in the above definition is to impose that within $u\in X^{0}$, the produced pieces of graphs $f(X^r)$ and $u.f(X_u^r)$ be non-trivially consistent, i.e. consistent and overlapping, see Figure \ref{fig:consistency}. The point here is to enforce the connectedness of the union of the pieces of graphs via a local, syntactic restriction. 
\begin{figure}
\includegraphics[scale=1]{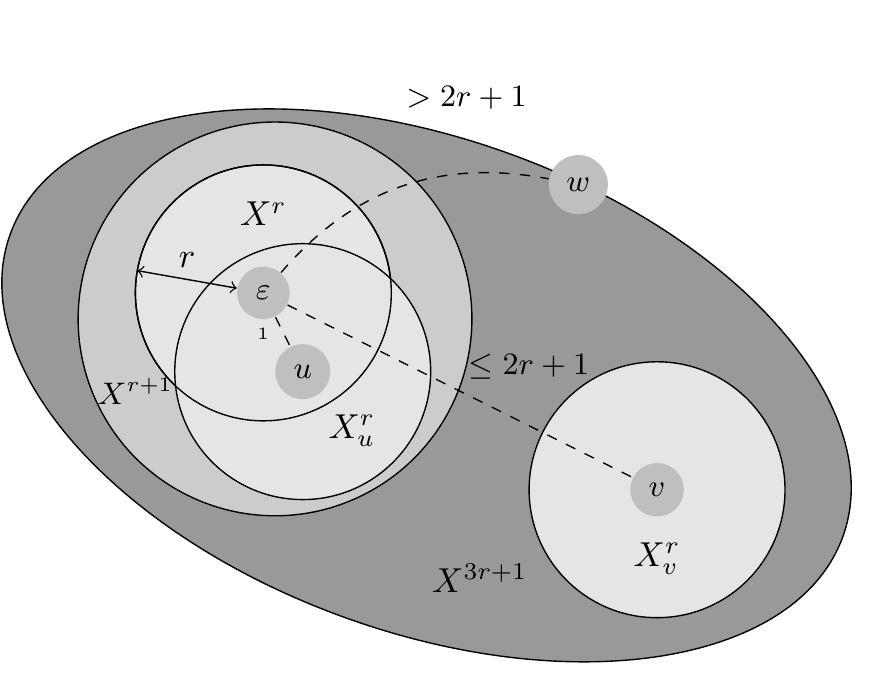}
\caption{\label{fig:consistency}  {\em The consistency conditions for a local rule.} The drawing represents disks of a generalized Cayley graph $X$ upon which a local rule $f$ of radius $r$ will be applied. $f(X^r)$ and $u.f(X_u^r)$ have to be non-trivially consistent since $\varepsilon$ and $u$ are at distance $1$. $f(X^r)$ and $v.f(X^r_v)$ have to be consistent but their intersection is allowed to be empty. $f(X^r)$ and $w.f(X^r_w)$ will be trivially consistent as they are to far to interact in one time step. The disk $X^{r+1}$ is enough to check all the non-trivial consistency conditions, as it comprises first neighbours and their $r$-disks. The disk $X^{3r+1}$ is enough to check all the consistency conditions, as it comprises all the $2r+1$ neighbours and their $r$-disks.}
\end{figure}
To illustrate the concept of local rule, we will now describe a local rule implementing the inflating grid dynamics (see Figure \ref{fig:inflatingglobal}).
The local rule is of radius zero: it ``sees'' the neighbouring vertices and nothing more.
In the standard case the local rule is applied on a vertex surrounded by $4$ neighbours. It then generates a graph of $12$ vertices, each with particular names (see Figure \ref{fig:inflatinggeneral}). In particular cases, when less than $4$ neighbours are present, the rule generates a graph of $10$, $8$, $6$ or $4$ vertices, each with particular names (see Figure \ref{fig:inflatingsecond}). The local rule is not exhaustively described here, since there exists $625$ different neighbourhoods of radius $0$. In any case, all generated vertex names are carefully chosen, so that when taking the union of all the generated subgraphs, the name collisions lead to the desired identification of vertices (see Figure \ref{fig:inflatingglueing}).
\begin{figure}
\includegraphics[scale=1.2]{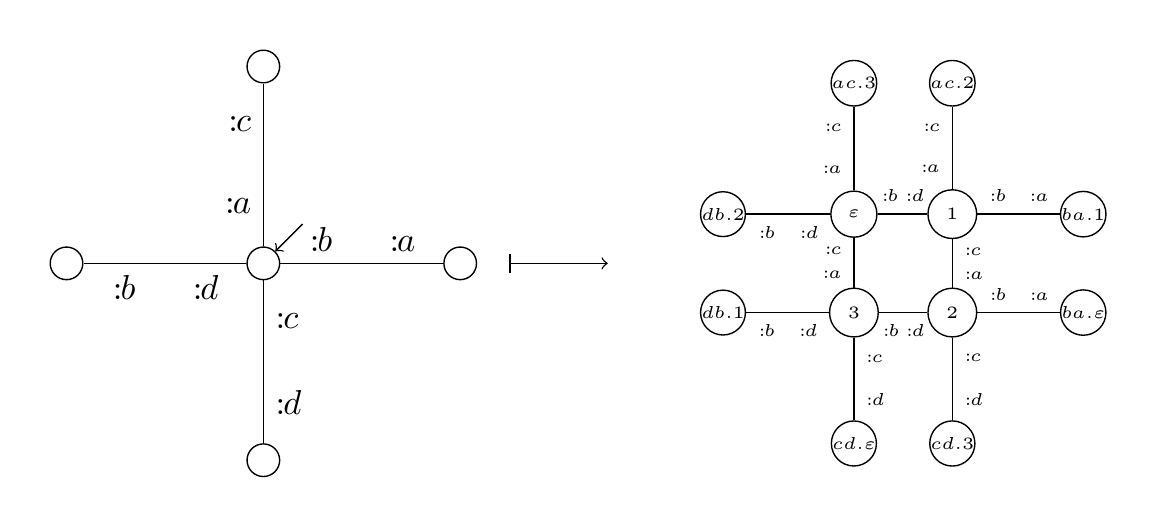}
\caption{\label{fig:inflatinggeneral} {\em Standard case of the inflating grid local rule.} The left-hand-side of the rule is a generalized Cayley graph of form $X^0_u$ (a disk of radius $0$). The right-hand-side is a graph whose vertex names are subsets of $V(X^0_u).S$. Here they are just singletons, curly brackets are dropped: e.g. we wrote $ac.3$ for $\{ac.3\}$, which should be understood as ``the third successor of my neighbour on edge $ac$''.}
\end{figure}
\begin{figure}
\includegraphics[scale=1.2]{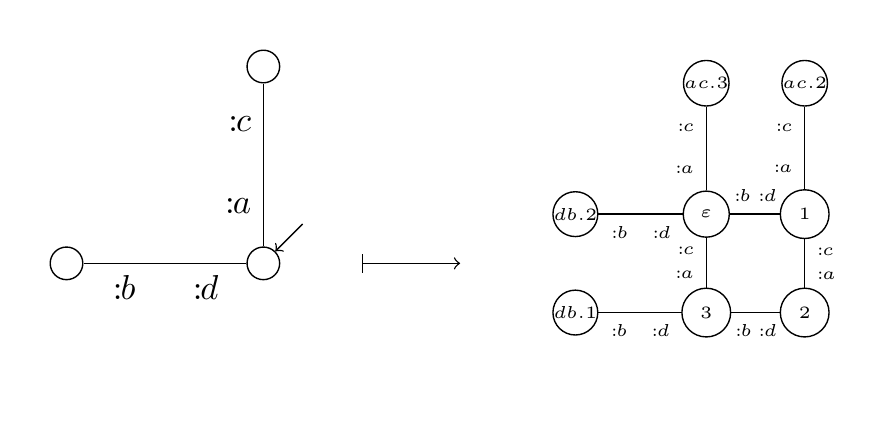}
\caption{\label{fig:inflatingsecond} {\em A particular case of the inflating grid local rule.} }
\end{figure}
\begin{figure}
\includegraphics[scale=1.2]{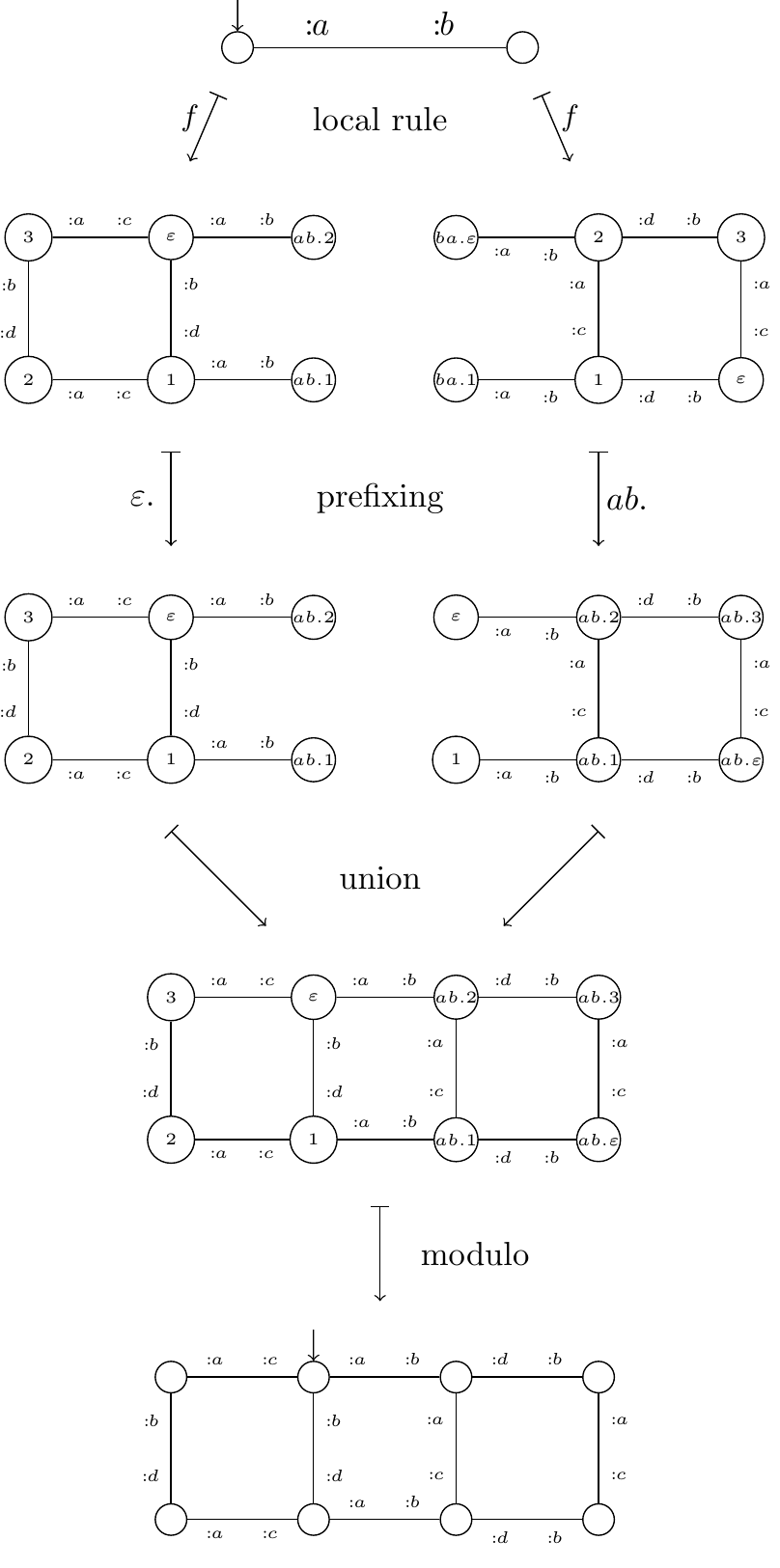}
\caption{\label{fig:inflatingglueing} {\em Local rule implementation of the inflating grid dynamics.} First the local rule is applied on the neighbourhood of every vertex of the input graph. The obtained graphs are prefixed (see definition \ref{def:concatenation}) by the vertex they are issued of. Third a union of graphs is performed to obtain the output graph. Lastly, the corresponding pointed graph modulo is returned.}
\end{figure}

\begin{definition}[Localizable function]\label{def:localizable}
A function $F$ from ${\cal X}_{\Sigma, \Delta, \ports}$ to ${\cal X}_{\Sigma, \Delta, \ports}$ is said to be {\em localizable} if and only if there exists a radius $r$ and a local rule $f$ from ${\cal X}^r_{\Sigma, \Delta, \ports}$ to ${\cal G}_{\Sigma, \Delta, \ports}$ such that for all $X$, $F(X)$ is given by the equivalence class modulo isomorphism, of the pointed graph
$$\bigcup_{u\in X} u.f(X_u^r)$$
with $\varepsilon$ taken as the pointer.
\end{definition}

~\medskip

\noindent {\em Equivalence.} The following theorem shows that the constructive definition (Localizable functions) is in fact equivalent to the mathematical, axiomatic definition (causal dynamics).

\begin{theorem}[Causal is localizable]\label{th:main}
Let $F$ be a function from ${\cal X}_{\Sigma, \Delta, \ports}$ to ${\cal X}_{\Sigma, \Delta, \ports}$. The function $F$ is localizable if and only if there exists $R_{\bullet}$ such that $(F,R_\bullet)$ is a causal dynamics.
\end{theorem}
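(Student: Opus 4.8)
The plan is to prove the two implications separately, working throughout in the marked-configuration formalism $F':{\cal X}_{\Sigma',\Delta,\ports}\to{\cal X}_{\Sigma',\Delta,\ports}$ with $\Sigma'=\Sigma\times\{0,1\}$, in which shift-invariance collapses to the single identity $F'(X_u)=F'(X)_{R_X(u)}$ and continuity collapses to continuity of $F'$ — hence, by compactness (Lemma \ref{lem:compactness}) and Heine's theorem, to uniform continuity. This is exactly the leverage the $F'$ picture was set up to provide.

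For the easy direction (localizable $\Rightarrow$ causal), suppose $F$ arises from a local rule $f$ of radius $r$. First I would define $R_X(u)$ to be the vertex of $F(X)$ into which the continuation vertex $\{\varepsilon\}\in f(X_u^r)$ is sent after prefixing by $u$, forming $\bigcup_w w.f(X_w^r)$, and passing to the modulo; this is well defined since $\varepsilon\in f(X_u^r)$ always. Shift-invariance of $(F,R_{\bullet})$ is then immediate from the homogeneity of the construction: shifting the input by $u$ merely relabels every local disk $X_w^r$ and the pointer, so the union is reproduced verbatim up to the shift, yielding $F(X_u)=F(X)_{R_X(u)}$ and the cocycle identity $R_X(u.v)=R_X(u).R_{X_u}(v)$. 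Continuity holds because the disk $F(X)^m$ is assembled from the finitely many pieces $w.f(X_w^r)$ with $w$ in a bounded disk of $X$, so it depends only on $X^n$ for a suitable $n$, and the same local assembly yields the continuity of $R_{\bullet}$. Boundedness holds because every vertex produced by $f(X_w^r)$ is a subset of $V(X_w^r).S$, hence sits within a bounded distance (controlled by $r$ and $|S|$) of the continuation $R_X(w)$, which is the bound $b$ of Definition \ref{def:boundednessmodulo}.

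For the hard direction (causal $\Rightarrow$ localizable), given $(F,R_{\bullet})$ causal I would reconstruct $f$. By uniform continuity of $F'$, choose a radius $r$ (and an output radius $m>b$, with $b$ the boundedness constant) so that, when $\varepsilon$ is the sole marked vertex, the radius-$m$ neighbourhood of the marked image vertex $R_X(\varepsilon)$ in $F(X)$ depends only on the disk $X^r$; this lets me define $f(D)$, for $D=X^r$, by realizing $D$ by any $X$ and reading off that neighbourhood. I then re-express each output vertex $w'$ appearing in it as a subset of $V(X^r).S$: boundedness together with the cocycle identity $R_X(u.v)=R_X(u).R_{X_u}(v)$ locates every input vertex $u\in X^r$ producing $w'$ (those with $w'=R_X(u).v'$, $v'$ in the disk of radius $b$), and to each such $u$ I attach a suffix $z_u\in S$, the continuation receiving $\varepsilon$ and genuinely new vertices receiving $1,\ldots$, the finiteness of $S$ being guaranteed by boundedness, the bounded degree of the graphs, and Lemma \ref{lem:connectivitymodulo}; the set-name of $w'$ records all these pairs $(u,z_u)$ in coordinates relative to $\varepsilon$. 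Verifying that $f$ is a local rule reduces to the consistency conditions, which assert precisely that two overlapping viewpoints $f(X^r)$ and $u.f(X_u^r)$ describe the same region of $F(X)$ and so assign matching set-names and labels — forced by shift-invariance. Finally $\bigcup_{u\in X}u.f(X_u^r)$ reconstructs $F(X)$: boundedness gives coverage of every vertex of $F(X)$, consistency makes the union well defined, and the modulo with pointer $\varepsilon$ returns $F(X)$ exactly.

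The hard part will be the bookkeeping of vertex names in the second direction: turning the globally-named region of $F(X)$ around $R_X(\varepsilon)$ into set-names over $V(X^r).S$ in a way that is simultaneously determined by the local disk alone, shift-invariant, and coherent across neighbouring viewpoints, so that the consistency conditions hold and the union glues back exactly to $F(X)$. Everything else — finite branching, coverage, the cocycle identity — follows fairly mechanically from uniform continuity, boundedness, and bounded inflation; it is the attribution of output vertices to input vertices and the coherent choice of suffixes that carries the real content, and it is here that shift-invariance is used most essentially.
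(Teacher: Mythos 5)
Your proposal is correct and follows essentially the same route as the paper's proof: in the easy direction you define $R_X$ exactly as the paper does (via the vertex of $\bigcup_w w.f(X_w^r)$ whose name contains $u$) and check the same four properties, and in the hard direction your recipe --- read off the bounded disk of $F(X)$ around $R_X(\varepsilon)$, determined by $X^r$ via Heine's theorem, and rename each output vertex $w'$ by the set of pairs $u.z_u$ over those $u$ with $w'=R_X(u).v'$, $v'$ in the radius-$b$ disk --- is precisely the paper's renaming $S_Y$ of Lemma~\ref{lem:locrenaming}. The ``bookkeeping'' you single out as the real content (locality, shift-invariance and cross-viewpoint coherence of the set-names) is exactly what properties (i)--(iv) of that lemma establish, so your plan matches the paper's proof in both structure and key construction.
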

\VL{
\begin{proof}
{\bf $[$Loc.$\Rightarrow$Caus.$]$} Let $F:{\cal X}_{\Sigma, \Delta, \ports}\rightarrow {\cal X}_{\Sigma, \Delta, \ports}$ be a localizable dynamics with local rule $f$ from ${\cal X}^r_{\Sigma,\Delta,\ports}$ to ${\cal G}_{\Sigma, \Delta, \ports}$: $F(X)$ is the equivalence class, with $\varepsilon$ taken as the pointer vertex, of the graph $H(X)=\bigcup u.f(X_u^r)$.\\
$[$Dynamics$]$ Using the dynamicity of the local rule $f$, for all $X^r$ we have $\varepsilon\in f(X^r)$. Therefore for all $u\in X$, we have $u\in u.f(X^r)$ and thus $u\in H(X)$. Let $R$ be an isomorphism such that $G(F(X))=RH(X)$. Let $u\in V(X)$, we define $R_X(u)$ to be $R(u')$, where $u'$ is the vertex of $H(X)$ that contains $u$ in its name. Notice that $\tili{(H(X),u)}=\tili{(R_XH(X),R_X(u))}=\tili{(G(F(X)),R_X(u))}=F(X)_{R_X(u)}$.\\
$[$Translation-invariance$]$ Take $u\in X$. We have $H(X_u)=\bigcup v.f(X_{u.v}^r)$. This is equal to $H(X_u)=\overline{u}.\bigcup u.v.f(X_{u.v}^r)$, which in turn is equal to $\overline{u}.H(X)$. Next, we have that $F(X_u)=\tili{(H(X_u),\varepsilon)}=\tili{(\overline{u}.H(X),\overline{u}.u)}=\tili{(H(X),u)}=F(X)_{R_X(u)}$. It follows that $F(X_u)=F(X)_{R_X(u)}$, and so $G(F(X_u))=\overline{R_X(u)}.G(F(X))$.  We have therefore
$$G(F(X))=R_X(u).G(F(X_u))=R_X(u).R_{X_u}H(X_u)=R_X(u).R_{X_u}\overline{u}.H(X).$$
But since the relation $G(F(X))=R H(X)$ defines $R_X$, we have proven that for all $u\in X$, $R_X=(R_X(u).R_{X_u}\overline{u}.)$. It follows that, for all $u.v\in X$, $R_X(u.v)=R_X(u).R_{X_u}(v)$.\\
$[$Boundedness$]$ for all $X$, for all $w'\in F(X)$, consider $w\in H(X)$ such that $w'=R(w)$ when $G(F(X))=RH(X)$, and $u\in X$ such that $w\in u.f(X_u^r)$. Since $\varepsilon\in f(X_u^r)$, we have $u\in u.f(X_u^r)$. Since $f$ is bounded, $w$ lies at most at a some distance $b$ of $u$ in $H(X)$. Since $G(F(X)=RH(X)$, $w'$ lies at most at a some distance $b$ of $u'=R(u)=R_X(u)$ in $F(X)$.\\
$[$Continuity$]$ The following is illustrated in Figure \ref{fig:continuity_proof}. Let $m\in\N$.  We must show that there exists $n$ such that $F(X)^m=\tilde{H}(X)^m_\varepsilon$ is determined by $X^n$.\\
Consider a sequence $v_0=\varepsilon, v_1, \ldots v_{m+1}$ of vertices of $H(X)$ such that for all $i\in \{0\ldots m\}$ there exists $e_i=(v_i:a_{i},v_{i+1}:b_{i+1})$ in $E(H(X))$.  For such an $e_i$ to exist, and given Definitions \ref{def:consistency} and \ref{def:union}, it must appear in some $u_i.f(X_{u_i}^r)$. Moreover if $\delta(e_i)$ is defined, it must be defined in some $u_i.f(X_{u_i}^r)$. Consider $u_0, u_1, \ldots u_m$ a sequence of vertices of $X$ such that this is the case. Also, since $v_{i+1}$ is a subset of $V(X).S$, there exists $w_i\in X, z_i\in S$ such that $w_i.z_i\in v_i$. Again consider $w_0=\varepsilon, w_1, \ldots w_{m+1}$ a sequence of vertices of $X$ such that this is the case.\\
\begin{figure}[htbp]
\centering
\begin{tikzpicture}[auto=center]
\tikzstyle{v}=[circle,fill=black!20]
\tikzstyle{u}=[circle,fill=black!20]
\tikzstyle{w}=[circle,fill=black!20]
\tikzstyle{xnode}=[circle,fill=black!20]
\tikzstyle{surround} = [circle,fill=blue!10,thick,draw=black,rounded corners=2mm]

\node[v] (v0) at (0,7) {$v_0=\varepsilon$};
\node[v] (v1) at (2,7) {$v_1$};
\node[v] (v2) at (4,7) {$v_2$};
\node[v] (vm) at (8,7) {$v_m$};
\node[v] (vmp) at (10,7) {$v_{m+1}$};

\node[xnode] (x0) at (0,5) {$x_0$};
\node[xnode] (x1) at (2,5) {$x_1$};
\node[xnode] (x2) at (4,5) {$x_2$};
\node[xnode] (xm) at (8,5) {$x_m$};

\node[w] (w0) at (0,3) {$w_0=\varepsilon$};
\node[w] (w1) at (2,3) {$w_1$};
\node[w] (w2) at (4,3) {$w_2$};
\node[w] (wm) at (8,3) {$w_m$};
\node[w] (wmp) at (10,3) {$w_{m+1}$};

\node[u] (u0) at (1,2) {$u_0$};
\node[u] (u1) at (3,2) {$u_1$};
\node (u2) at (5,2) {};
\node (umm) at (7,2) {};
\node[u] (um) at (9,2) {$u_m$};

\draw (v0)--(v1)--(v2) (vm)--(vmp);

\path[dashed]
	(v2) edge node {} (vm)
;

\path[<->,dotted] (v0) edge [bend left] node[above] {$\leq m+1$} (vmp) ;

\path[dashed]
	(w0) edge node[below left] {$\leq r+1$} (u0)
	(u0) edge node[below right] {} (w1)
	(w1) edge node[above right] {} (u1)
	(u1) edge node[below right] {} (w2)
	(w2) edge node {} (u2)
	(umm) edge node {} (wm)
	(wm) edge node[below left] {} (um)
	(um) edge node[below right] {$\leq r+1$} (wmp)
	(w0) edge node[right] {$\leq r+1$} (x0)
	(w1) edge node[] {} (x1)
	(w2) edge node[] {} (x2)
	(wm) edge node[] {} (xm)
;

\path[<->,dotted] (w1) edge node[above] {$\leq 2(r+1)$} (w2)
(w0) edge [bend left] node[below] {$\leq 2(m+1)(r+1)$} (wmp);


\end{tikzpicture}

\caption{{\em Proof of continuity.}\label{fig:continuity_proof}}
\end{figure}
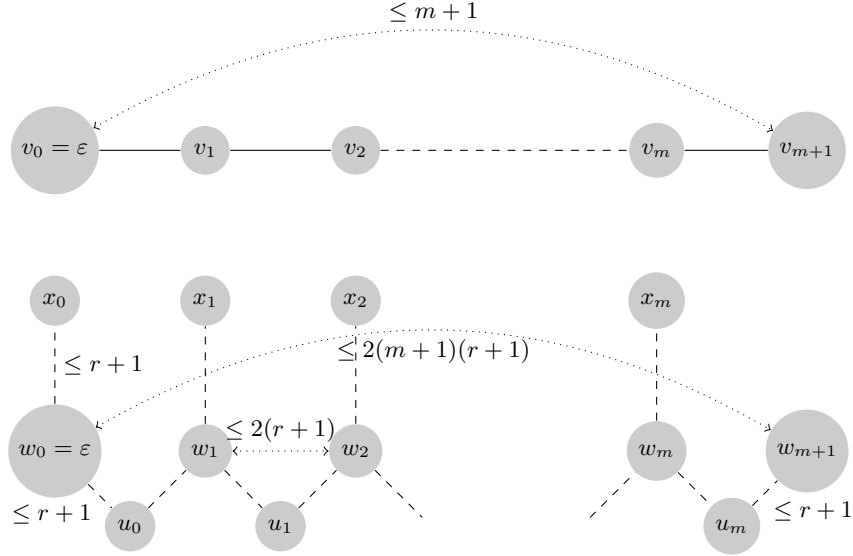
Since $e_i$ is in $u_i.f(X_{u_i})$, it follows that $v_i$ and $v_{i+1}$ are in $u_i.f(X_{u_i})$. This entails that $v_i$ and $v_{i+1}$ are subsets of $u_i.V(X_{u_i}).S$, thus in particular $w_i,w_{i+1}\in u_i.V(X_{u_i})$. 
Therefore we have both $w_{i+1}\in u_i.X_{u_i}$ and $w_{i+1}\in u_{i+1}.X_{u_{i+1}}$. As a consequence $u_i$ and $u_{i+1}$ lie at distance $2(r+1)$ in $X$, and it follows that $\bigcup_{i=0\ldots m} u_i.X_{u_i}^r \subseteq X^{2(m+1)(r+1)-1}$. 
Hence $X^{2(m+1)(r+1)-1}$ determines $E(H(X)_\varepsilon^m)$ and their internal states.\\
For $\sigma(v_i)$ to be defined, there must exists $x_i\in X$ such that $\sigma(v_i)$ is defined in $x_i.f(X_{x_i}^r)$. Consider $x_0, x_1, \ldots x_m$ a sequence of vertices of $X$ such that this is the case. But since $v_i\in x_i.f(X_{x_i}^r)$, we must have that $w_i\in x_i.X_{x_i}^r$. Thus $x_{j+1}$ lies at distance at most $r+1$ of $u_j. X_{u_j}^r$. Hence $x_j$ lies at distance at most $r+1$ of $\bigcup_{i=0}^{m-1} u_i.X_{u_i}^r\subseteq X^{2m(r+1)-1}$. Hence $x_j\in X^{2m(r+1)+r}$, and thus $\bigcup_{i=0\ldots m} x_i.X_{x_i}^r \subseteq X^{2m(r+1)+2r+1}$. Hence $X^{2(m+1)(r+1)-1}$ determines the internal states of $H(X)_\varepsilon^m$.\\
Summarizing, $X^{n}$, with $n=2(m+1)(r+1)-1$ determines $F(X)^m=\tilde{H}(X)_\varepsilon^m$.\\
Consider some $v''\in R_X^m$. This means that $v''\in (RH(X))_\varepsilon^m$ and $v''=R(v')$ for some $v'\in H(X)$ that contains $v\in X$ in its name. Hence $v'\in H(X)_\varepsilon^m$, where we used $R(\varepsilon)=\varepsilon$. Since this determined by $X^n$, we have $v\in X^n$. Hence $\dom R_X^m\subseteq X^n$. Moreover, consider $X'$ such that $X'^r=X^r$. Therefore $v\in X'^r$, $H(X)_\varepsilon^m$ and $H(X')_\varepsilon^m$ are isomorphic, and this isomorphism sends $v'$ to the $w'$ of $H(X')_\varepsilon^m$ whose name contains $v$. 
Therefore $F(X)^m$ and $F(X')^m$ are equal, and the same paths designate $R_X^m(v)$ and $R_{X'}^m(v)$, which are thus equal.\\
{\bf $[$Caus.$\Rightarrow$Loc.$]$} Let $(F,R_\bullet)$ be a causal dynamics.  Let $b_0$ and $b_1$ be respectively the bounds given by Definition~\ref{def:boundednessmodulo} and Lemma~\ref{lem:connectivitymodulo}, and $b=\max(b_0+1,b_1)$.  Let $m=3b+2$. Let $r$ be the radius such that for all $X,X'$, $X^{r}=X'^{r}$ implies $F(X)^m=F(X')^m$ and $R_X^m=R_{X'}^m$, from Definition \ref{def:causal} and Heine's Theorem. 
We will construct $f$ from ${\cal X}^r$ to ${\cal G}_{\Sigma, \Delta, \ports}$ so that for all $X^r$, the graph $f(X^r)$ is a well-chosen member of the equivalence class $F(X^r)^b$. Hence we must instantiate $F(X^r)^b$ via a suitable, local naming of its vertices. We use the isomorphism $S_{X^r}$ of Lemma \ref{lem:locrenaming} for this purpose, i.e. $f(X^r)=S_{X^r}G(F(X^r)^b)$.\\
$[$Dynamics$]$ For all $X^r$, $f(X^r)$ has vertices that are subsets of $V(X^r).S$, by definition. These sets are disjoint, by Lemma \ref{lem:locrenaming} $(i)$ applied to pairs of vertices of $F(X^r)^b$. Moreover $\varepsilon\in f(X^r)$, since $\varepsilon\in F(X^r)^b$ and $S_{X^r}(\varepsilon)=\varepsilon$ by Lemma \ref{lem:locrenaming} $(ii)$.\\
$[$Boundedness$]$ For all $X^r$, the graph $f(X^r)$ is finite, by construction.\\
$[$Consistency$]$ In order to show the consistency of $f$, we will show that for all $X$, $u\in X$, we have that $u.f(X_u^r)$ is the subgraph $H(X)_u^b$ of $H(X)$, where $H(X)$ is a well-chosen member of the equivalence class $F(X)$. Hence we must instantiate $F(X)$ via a suitable naming of its vertices. We use the isomorphism $S_X$ of Lemma \ref{lem:locrenaming} for this purpose, i.e. $H(X)=S_X G(F(X))$.\\
Start from $u.f(X_u^r)=u.S_{X_u^r} G(F(X_u^r)^b)$, which is equal to $u.S_{X_u} G(F(X_u)^b)$, by Lemma \ref{lem:locrenaming} $(iii)$ and using the fact that $F(Y)^b=F(Y^r)^b$. This, in turn, is equal to $u.\big(S_{X_u} G(F(X_u))\big)^b$, using the natural inclusion of $F(Y)^b$ into $F(Y)$. This, in turn, is equal to $u.\big(S_{X_u} \overline{R_X(u)}.G(F(X))\big)^b$, by shift-invariance, which is equal to $u.\big(\overline{u}.S_X G(F(X))\big)^b$, by Lemma \ref{lem:locrenaming} $(iv)$. This, finally, is $\big(S_X G(F(X))\big)_u^b=H(X)_u^b$, since it is true that for any graph $G$ and any isomorphism $T$, $TG_u^b=(TG)_{T(u)}^b$ and thus $G_u^b=T^{-1}(TG)_{T(u)}^b$.\\
Summarizing, $u.f(X_u^r)=H(X)_u^b$. Moreover if $u\in X^0$, then notice that $u\in f(X^r)$ and $u\in u.f(X_u^r)$, and hence they are non-trivially consistent.\\
Since $f$ is consistent, and $f(X^r)$ is a representant of $F(X^r)^b$, it remains only to remark that $F(X)=\bigcup u.F(X_u^r)^b$, which is true because $b$ was chosen to be strictly larger the one given by Definition~\ref{def:boundednessmodulo}, insuring that all the vertices and edges of $F(X^r)$ are covered, along with their labels.\qed
\end{proof} }

In the proof of Theorem \ref{th:main}, the renaming $S_X$ takes a generalized Cayley graph $F(X)$ into a mere graph $H(X)$. It does so by providing names for the vertices of $F(X)$, that are subsets of $V(X).S$. The idea is that $w'$ in $F(X)$ gets named $S_X(w')$, which is the set of those $u.z$, such that $u'=R_X(u)$ is close to $w'$, and $z$ is an integer encoding the remaining path between $u'$ and $w'$. The following lemma formalizes this idea as well as some useful, technical although expected properties.
\begin{lemma}[Local renaming properties]\label{lem:locrenaming}
Let $(F,R_\bullet)$ be a causal dynamics.  Let $b$ be the maximum of the bounds from Definition \ref{def:boundednessmodulo} and Lemma \ref{lem:connectivitymodulo}.  Let $m=3b+2$.  Let $r$ be the radius such that for all $X,X'$, $X^{r}=X'^{r}$ implies $F(X)^{m}=F(X')^{m}$ and $R_X^m=R_{X'}^m$, from Definition \ref{def:causal} and Heine's Theorem. Let $z$ be an injection from $V({\cal X}^b_\pi)\backslash\varepsilon$, as in Definition \ref{def:size}, to $\mathbb{N}$. Let $z(\varepsilon)$ be the empty word. Let $Y$ be a generalized Cayley graph. Consider $S_Y$ such that for all $w'\in F(Y)$ we have 
$$S_{Y}(w')=\{u.z(v')\,|\,u'.v'=w'\,\wedge\,u\in Y\,\wedge\,u'=R_{Y}(u) \,\wedge\,v'\in F(Y)_{u'}^b \}.$$
We have:
\begin{itemize}
\item[(i)] $\forall w_1', w_2'\in F(Y)$, $S_{Y}(w_1')\cap S_{Y}(w_2')\neq \emptyset \Rightarrow S_{Y}(w_1')=S_{Y}(w_2')$.
\item[(ii)] $\varepsilon\in S_Y(\varepsilon)$.
\item[(iii)] $\forall w'\in F(X_u)^b$, $u.S_{X_u^r}(w') = u.S_{X_u}(w')$.
\item[(iv)] $\forall v'\in F(X_u)$, $S_X(R_X(u).v')=u.S_{X_u}(v')$.
\end{itemize}
\end{lemma}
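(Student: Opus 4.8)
The plan is to treat the four items separately, since they draw on quite different parts of the framework: (i) and (ii) follow directly from the defining formula for $S_Y$; (iv) is pure shift-invariance bookkeeping; and (iii) is where the choice $m=3b+2$ and the Heine radius $r$ actually have to be spent, so that is the part I would set up most carefully.

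For (i), I would first observe that every element of $S_Y(w')$ is a name lying in $V(Y).S$, and that such a name decomposes uniquely into its vertex part in $V$ and its suffix part in $S$. Hence if a single name belongs to both $S_Y(w_1')$ and $S_Y(w_2')$, reading it as a witness for $w_1'$ gives a decomposition $a_1.z(v_1')$ and reading it as a witness for $w_2'$ gives $a_2.z(v_2')$; uniqueness of the decomposition forces $a_1=a_2$ and $z(v_1')=z(v_2')$, whereupon injectivity of $z$ gives $v_1'=v_2'$. But then $w_1'=R_Y(a_1).v_1'=R_Y(a_2).v_2'=w_2'$, so in fact $S_Y$ is injective and (i) holds in the strong form $w_1'=w_2'$. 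For (ii) I would simply exhibit the witness: take $u=\varepsilon\in Y$, so that $u'=R_Y(\varepsilon)=\varepsilon$ (as already noted after the definition of shift-invariance), take $v'=\varepsilon\in F(Y)_\varepsilon^b$, and check $u'.v'=\varepsilon$ and $u.z(v')=\varepsilon.z(\varepsilon)=\varepsilon$; thus $\varepsilon\in S_Y(\varepsilon)$.

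For (iv) I would set $p=R_X(u)$ and $w'=p.v'$, and exploit the bijection $d\mapsto u.d$ between $V(X_u)$ and $V(X)$. The key identities are the shift-invariance relations $F(X_u)=F(X)_{p}$ and $R_X(u.d)=p.R_{X_u}(d)$, together with $(X_u)_v=X_{u.v}$ and associativity of concatenation. Writing a generic contributing vertex of $S_X(w')$ as $a=u.d$ with $a'=R_X(a)=p.d'$ and $d'=R_{X_u}(d)$, the disk $F(X)_{a'}^b$ equals $F(X_u)_{d'}^b$, and the factorisation $a'.c'=w'$ in $F(X)$ translates verbatim into $d'.c'=v'$ in $F(X_u)$. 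Since the suffix $z(c')$ is unchanged, each term $a.z(c')=u.(d.z(c'))$ of $S_X(w')$ is exactly the image under $u.(\,\cdot\,)$ of the term $d.z(c')$ of $S_{X_u}(v')$, and this correspondence is a bijection, giving $S_X(R_X(u).v')=u.S_{X_u}(v')$.

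The heart of the argument, and the step I expect to be the main obstacle, is (iii), where I must show that replacing $X_u$ by its $r$-truncation $X_u^r$ does not change the names produced for $w'\in F(X_u)^b$. Writing $Y=X_u$, so that $Y^r=X_u^r$ and $Y,Y^r$ share the disk of radius $r$, the plan is a distance count. For a term $a.z(v')$ of $S_Y(w')$ with $w'\in F(Y)^b$, the factorisation $a'.v'=w'$ with $v'\in F(Y)_{a'}^b$ forces, by the triangle inequality, $|a'|\leq |w'|+|v'|\leq 2b+2\leq m$, and then every vertex of the disk $F(Y)_{a'}^b$ lies within radius $3b+3=m+1$ of $\varepsilon$, i.e.\ inside $F(Y)^m$; this is exactly why $m=3b+2$ is chosen. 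Consequently $a'=R_Y(a)\in F(Y)^m$, so $a\in\dom R_Y^m\incl V(Y^r)$ by the continuity clause of Definition~\ref{def:continuitymodulo}, and the same holds verbatim with $Y^r$ in place of $Y$. Since $Y$ and $Y^r$ agree on radius $r$, Heine's theorem gives $F(Y)^m=F(Y^r)^m$ and $R_Y^m=R_{Y^r}^m$; hence the contributing vertices $a$, their images $a'$, the disks $F(Y)_{a'}^b$, and the factorisations $a'.v'=w'$ all coincide for $Y$ and $Y^r$. Therefore $S_Y(w')=S_{Y^r}(w')$, and prepending $u.$ yields (iii). The only delicate point is the radius bookkeeping that keeps every object involved inside radius $m$; once the constant $m=3b+2$ is pinned down by the worst case above, the remainder is substitution.
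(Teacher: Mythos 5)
Your proposal is correct and follows essentially the same route as the paper's proof: (i) via a common name forcing $w_1'=w_2'$, (ii) by exhibiting the witness built from $u=\varepsilon$, $v'=\varepsilon$, (iii) by the same radius bookkeeping (the factorisation forces $|a'|\leq 2b+2$, the relevant $b$-disks sit inside the radius-$(3b+2)$ disk, and then the choice of $r$ gives $F(Y)^m=F(Y^r)^m$ and $R_Y^m=R_{Y^r}^m$), and (iv) by transporting the defining set through the bijection $d\mapsto u.d$ using $R_X(u.d)=R_X(u).R_{X_u}(d)$ and $F(X_u)=F(X)_{R_X(u)}$. The only cosmetic difference is that the paper proves (iii) first in the $u=\varepsilon$ case ($S_X$ versus $S_{X^r}$) and then transfers to $S_{X_u}$ versus $S_{X_u^r}$, whereas you work directly with $Y=X_u$; both versions recover full (rather than up-to-natural-inclusion) equality only after prepending $u$, as you note.
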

\begin{proof}
$[(i)]$ Consider $w_1', w_2'$ such that $S_{Y}(w_1')$ and $S_{Y}(w_2')$ have a common element $u.z(v')$. This entails that  $w_1'=u'.v'=w_2'$ is the same vertex in $F(Y)$, and thus that $S_{Y}(w_1')=S_{Y}(w_2')$.\\
$[(ii)]$ Since $z(\varepsilon)=\varepsilon$, $\varepsilon.\varepsilon=\varepsilon$, $\varepsilon=R_Y(\varepsilon)$ and $\varepsilon\in F(Y)^b$. \\
$[(iii)]$ Consider the $u=\varepsilon$ case. Let $w'$ be a vertex of $F(X)^b$, and $u'\in F(X)$ a vertex such that $u'.v'=w'$, with $|v'|\leq b+1$. We necessarily have that $u'\in F(X)^{2b+1}$. Moreover, since $F(X)^{3b+2}=F(X^r)^{3b+2}$, we have $F(X)_{u'}^b=F(X^r)_{u'}^b$. Also, using $R_{X}^m=R_{X^r}^m$, we have that
$$u'=R_{X}(u)\,\Leftrightarrow\, u'=R_{X}^m(u)\,\Leftrightarrow\, u'=R_{X^r}^m(u)\,\Leftrightarrow\, u'=R_{X^r}(u).$$
where the middle equivalence uses the natural inclusion of $X^r$ into $X$. As a consequence the two sets:
\begin{align*}
S_{X}(w')&=\{u.z(v')\,|\,u'.v'=w'\,\wedge\,u\in X\,\wedge\,u'=R_{X}(u) \,\wedge\,v'\in F(X)_{u'}^b \}\\
S_{X^r}(w')&=\{u.z(v')\,|\,u'.v'=w'\,\wedge\,u\in X^r\,\wedge\,u'=R_{X^r}(u) \,\wedge\,v'\in F(X^r)_{u'}^b \}
\end{align*}
are equal, up to the natural inclusion of $X^r$ into $X$. The same holds for $S_{X_u}$ and $S_{X_u^r}$. Then, since the shift operation $(u.)$ is from $V(X^n)$ to $V(X)$, full equality holds between $u.S_{X_u}$ and $u.S_{X_u^r}$.\\
$[(iv)]$ Consider some $u'.v'.w'\in F(X)$ with $u'=R_{X}(u)$, $v'=R_{X_u}(v)$ and $w'\in F(X_{u.v})^b$.
\begin{align*}
u.S_{X_u}(v'.w') &= u.\{x.z(y')\,|\,v'.w'=x'.y'\,\wedge\,x\in X_u\\
& \hspace{3cm} \wedge\,x'=R_{X_u}(x) \,\wedge\,y'\in F(Y)_u'^b\}\\
&= \{u.x.z(y')\,|\,u'.v'.w'=u'.x'.y'\,\wedge\,u.x\in X\\
&\hspace{3cm} \wedge u'.x'=R_{X}(u.x) \,\wedge\,y'\in F(Y)_u'^b\}\\
&= S_X(u'.v'.w')\\
&= S_X (R_X(u).v'.w')
\end{align*}\qed
\end{proof}

Our causal dynamics over generalized Cayley graphs is a candidate model of computation accounting for space, but without this space being fixed. As a candidate model of computation, we must check that it is computable. The following shows that we can decide whether a syntactic object is a valid instance of the model.
\begin{proposition}[Decidability of consistency]\label{prop:decidable}
Given a dynamics $f$ from ${\cal X}^r_{\Sigma, \Delta, \ports}$ to ${\cal G}_{\Sigma, \Delta, \ports}$, it is decidable whether $f$ is a local rule.
\end{proposition}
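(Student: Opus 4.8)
The plan is to exploit the fact that, for every fixed radius $k$, the set ${\cal X}^k_{\Sigma,\Delta,\ports}$ of disks of radius $k$ is \emph{finite}. Indeed, a disk of radius $k$ is a connected pointed graph of degree bounded by $|\ports|$ whose vertices are reachable in at most $k{+}1$ steps from the pointer, so it has a bounded number of vertices; together with the finiteness of the label sets $\Sigma$ and $\Delta$, this leaves only finitely many such disks up to isomorphism. In particular the domain ${\cal X}^r_{\Sigma,\Delta,\ports}$ of $f$ is finite, so $f$ is given effectively as a finite lookup table assigning to each disk $X^r$ a graph $f(X^r)$. Under this presentation, boundedness (in the sense of the definition of boundedness non-modulo) is immediate to check: $f$ is bounded if and only if each of the finitely many graphs $f(X^r)$ is finite. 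It then remains to decide the two consistency clauses in the definition of a local rule.

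The key reduction is that both clauses are already phrased over \emph{disks} rather than over full generalized Cayley graphs, and these disks again range over finite sets. The non-trivial consistency clause quantifies over $X^{r+1}$ and $u \in X^0$; since such $u$ lies at distance at most one from $\varepsilon$, the disk $X_u^r$ is contained in $X^{r+1}$, so both $f(X^r)$ and $u.f(X_u^r)$ are determined by $X^{r+1}$, which ranges over the finite set ${\cal X}^{r+1}_{\Sigma,\Delta,\ports}$. Likewise the full consistency clause quantifies over $X^{3r+2}$ and $u \in X^{2r+1}$, and for such $u$ the disk $X_u^r$ is contained in $X^{3r+2}$, so the two compared graphs are determined by $X^{3r+2}$, which ranges over the finite set ${\cal X}^{3r+2}_{\Sigma,\Delta,\ports}$. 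Hence each clause becomes a finite conjunction: enumerate the finitely many disks of the relevant radius, and for each of them enumerate the finitely many admissible $u$, forming the two finite graphs $f(X^r)$ and $u.f(X_u^r)$.

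Finally, each atomic test is decidable. That two given finite graphs are consistent, resp. non-trivially consistent, is a finite first-order condition: the clauses (i)--(iv) of Definition \ref{def:consistency} quantify only over the finitely many vertices, edges and ports of two finite graphs, and comparing partial labellings where both are defined is effective; non-trivial consistency additionally requires the two vertex sets to overlap, again a finite check. Assembling these, the whole procedure is a finite computation over finite sets, so membership in the class of local rules is decidable.

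The step I expect to carry the only real content is the reduction of the second paragraph: the two quantifications ``for all $X^{r+1}$'' and ``for all $X^{3r+2}$'' a priori range over infinitely many configurations, and one must argue that they collapse to quantifications over the finite sets ${\cal X}^{r+1}_{\Sigma,\Delta,\ports}$ and ${\cal X}^{3r+2}_{\Sigma,\Delta,\ports}$, because the compared graphs depend on $X$ only through a disk of bounded radius (this is exactly the distance bookkeeping already summarised in the discussion following the definition of local rule and illustrated in Figure \ref{fig:consistency}). Everything else is routine finite enumeration.
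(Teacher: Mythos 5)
Your proposal is correct and takes essentially the same approach as the paper's own proof: both rest on the observation that, for fixed radius $k$, the set ${\cal X}^k_{\Sigma,\Delta,\ports}$ of disks is finite (bounded degree $|\ports|$, finite label sets $\Sigma,\Delta$), so each clause in the definition of a local rule reduces to a finite enumeration of disks of radius $r+1$, respectively $3r+2$, followed by a finite consistency check between the two finite graphs $f(X^r)$ and $u.f(X^r_u)$. Your write-up is in fact more explicit than the paper's terse three-bullet procedure, notably in verifying that each atomic consistency test is itself effective.
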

\begin{proof}
First of all notice that there is a finite number of disks $X^b$  of radius $b$, with labels in finite sets $\Delta$ and $\Sigma$. 
The following informal procedure verifies that $f$ is a local rule:
\begin{itemize}
\item[$\bullet$] For each $X^r$ check that $\varepsilon \in f(X^r)$.
\item[$\bullet$] For each $X^{r+1}$ check that for all $u\in X^0$, $f(X^r)$ and $u.f(X^r_u)$ are non-trivially consistent.
\item[$\bullet$] For each $X^{3r+2}$ check that for all $u\in X^{2r+1}$, $f(X^r)$ and $u.f(X^r_u)$ are non-trivially consistent.
\end{itemize}\qed
\end{proof}

\noindent Finally, we prove that if the initial state is finite, its evolution can be computed.
\begin{proposition}[Computability of causal functions]\label{prop:computable}
Given a local rule $f$ and a finite generalized Cayley graph $X$, then $F(X)$ is computable, with $F$ the causal dynamics induced by $f$.
\end{proposition}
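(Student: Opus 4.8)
The plan is to exhibit an explicit finite procedure that evaluates the formula of Definition \ref{def:localizable}, namely $F(X)=\widetilde{\bigcup_{u\in X}u.f(X_u^r)}$, on the finite input $X$, and then to check that every operation it invokes is effective on a finite encoding of generalized Cayley graphs. First I would fix the encoding: by Theorem \ref{th:SP} a finite generalized Cayley graph may be represented by its associated pointed labelled graph $(G,\varepsilon)$ together with a canonical finite representative word for each vertex (say the length-lexicographically least path in its class); since $\Sigma$ and $\Delta$ are finite and the degree is bounded by $|\ports|$, this is a finite datum. I would also note that the local rule $f$ is itself a finite object: its domain ${\cal X}^r_{\Sigma,\Delta,\ports}$ is a finite set, there being only finitely many disks of radius $r$ with labels in the finite sets $\Sigma,\Delta$, so $f$ is presented as a finite lookup table and evaluating it is a table lookup.

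Next I would realize the building blocks. Enumerating $V(X)$ is effective since $X$ is finite. For each $u\in V(X)$, computing the shifted, truncated disk $X_u^r$ amounts to a breadth-first exploration of $(G,\varepsilon)$ out to radius $r+1$ starting from $u$, followed by the relabelling of the visited vertices by paths relative to $u$ as in Definition \ref{def:shift}; this is a finite computation whose output lands in the finite domain ${\cal X}^r_{\Sigma,\Delta,\ports}$. Looking up $f(X_u^r)$ then yields a finite graph (finiteness is precisely the boundedness clause in the definition of a local rule), and prefixing it by $u$, the shift isomorphism carried out on the canonical finite representatives, produces the finite graph $u.f(X_u^r)$ effectively.

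Then I would assemble the union $H(X)=\bigcup_{u\in X}u.f(X_u^r)$. This is a finite graph: there are finitely many $u$, each summand is finite, and the union is well defined because $f$ is a local rule, so its pieces are pairwise consistent (Definition \ref{def:consistency}); computing the union of finitely many finite labelled graphs with names in $V(X).S$ is plainly effective. Finally, $F(X)$ is by Definition \ref{def:localizable} the class $\widetilde{(H(X),\varepsilon)}$. Since $H(X)$ is finite, its equivalence class modulo isomorphism (Definition \ref{def:isomorphism}) is computable by finite graph canonicalisation; equivalently one may output the disk $F(X)^R$ for $R$ large enough that the disk saturates the whole finite graph. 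This produces a finite generalized Cayley graph, completing the computation.

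The main obstacle is conceptual rather than combinatorial: a priori the vertices of a generalized Cayley graph are equivalence classes of arbitrarily long paths, and $F(X)$ is formally an adjacency structure carrying an infinite path language, so one must argue that nothing infinite is ever needed. The crux is therefore to verify, first, that each intermediate object $X_u^r$, $f(X_u^r)$, $u.f(X_u^r)$, $H(X)$ and $F(X)$ is genuinely finite, which follows from the finiteness of $X$, the boundedness of $f$, and Theorem \ref{th:SP} identifying a finite pointed graph modulo with a finite adjacency structure; and second, that the shift, truncation, prefix, union and take-modulo operations all restrict to effective maps on the chosen finite encoding, the only delicate point being that passage to the class modulo isomorphism is decidable for finite graphs.
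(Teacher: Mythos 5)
Your proposal is correct and follows essentially the same route as the paper's (much terser) proof: the paper simply observes that the images of the radius-$r$ disks of $X$ under $f$ are finite and pairwise consistent, and that the finite union of finite consistent graphs is computable. Your additional detail — the finite lookup-table presentation of $f$, the breadth-first computation of $X_u^r$, and the decidability of passing to the class modulo isomorphism for finite graphs — merely makes explicit what the paper leaves implicit.
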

\begin{proof}
Since $f$ is a local rule, the images of disks of radius $r$ included in $X$ are all finite, and consistent with one another. Moreover the finite union of finite, consistent graphs, is computable.\qed
\end{proof}

\section{Properties}\label{sec:Consequences}

\noindent {\em Composability.} We have characterized causal dynamics as the continuous, shift-invariant, bounded functions over generalized Cayley graphs. An important question is whether this notion is general enough. A good indicator of this robustness is that it is stable under composition.
\begin{definition}[Composition]
Consider two dynamics $(F,R_\bullet)$ and $(G,S_\bullet)$. Their composition $(G,S_\bullet)\circ (F,R_\bullet)$ is $(G\circ F,T_\bullet)$ where $T_X=S_{F(X)}\circ R_X$, i.e. $T_X(v)= S_{F(X)}(R_X(v))$.
\end{definition}
Indeed, stability under composition holds for classical and reversible cellular automata, but has failed to be obtained for the early definitions of probabilistic cellular automata and quantum cellular automata (see \cite{ArrighiPCA} and \cite{DurrUnitary,SchumacherWerner,ArrighiLATA} for a discussion).

\begin{theorem}[Composability]\cite{ArrighiCGD}\label{th:composability}
Consider causal dynamics $(F,R_{\bullet})$ and $(G,S_{\bullet})$, both over ${\cal X}_{\Sigma,\Delta,\ports}$. Then their composition is also a causal dynamics.
\end{theorem}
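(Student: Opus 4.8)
The plan is to verify that the composed dynamics $\pa{G\circ F, T_\bullet}$, with $T_X = S_{F(X)}\circ R_X$, satisfies each of the three defining clauses of Definition~\ref{def:causal}: shift-invariance, continuity, and boundedness. Two of these follow from the corresponding properties of $F$ and $G$ by routine chaining of the identities; the real work lies in boundedness.

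First, shift-invariance. For the global clause I would compute $\pa{G\circ F}(X_u) = G(F(X_u)) = G(F(X)_{R_X(u)}) = G(F(X))_{S_{F(X)}(R_X(u))} = \pa{G\circ F}(X)_{T_X(u)}$, using shift-invariance of $F$, then of $G$, then the definition of $T_X$. For the cocycle clause on $T_\bullet$, I would expand $T_X(u.v) = S_{F(X)}(R_X(u.v)) = S_{F(X)}\pa{R_X(u).R_{X_u}(v)}$ by shift-invariance of $R_\bullet$, and then apply shift-invariance of $S_\bullet$ to split this as $S_{F(X)}(R_X(u)).S_{F(X)_{R_X(u)}}(R_{X_u}(v))$. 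Since $F(X)_{R_X(u)} = F(X_u)$, the second factor is $S_{F(X_u)}(R_{X_u}(v)) = T_{X_u}(v)$ and the first is $T_X(u)$, yielding $T_X(u.v) = T_X(u).T_{X_u}(v)$.

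Second, continuity. Here I would exploit the reformulation $F':\mathcal{X}_{\Sigma',\Delta,\ports}\to\mathcal{X}_{\Sigma',\Delta,\ports}$, with $\Sigma'=\Sigma\times\{0,1\}$, in which a dynamics becomes a single function on labelled graphs and continuity of the dynamics is exactly topological continuity of that function. A direct computation gives $(G'\circ F')(X,S) = \pa{(G\circ F)(X), T_X(S)}$, so the primed function attached to the composed dynamics is precisely $G'\circ F'$. Being the composition of two continuous maps between compact metric spaces, $G'\circ F'$ is continuous, hence (by Heine's Theorem) uniformly continuous; this settles the continuity of $G\circ F$ and of $T_\bullet$ simultaneously.

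Third, boundedness, which I expect to be the main obstacle, since it requires controlling how much $G$ can dilate the bounded remainder produced by $F$. Let $b_F, b_G$ be the boundedness constants of $F$ and $G$ from Definition~\ref{def:boundednessmodulo}, and let $b_1$ be the inflation constant of $G$ from Lemma~\ref{lem:connectivitymodulo}. Given $w'\in G(F(X))$, apply boundedness of $G$ to the graph $F(X)$ to write $w' = p'.q'$ with $p' = S_{F(X)}(p)$ for some $p\in F(X)$ and $q'\in G(F(X))_{p'}^{b_G}$. Then apply boundedness of $F$ to $p$ to write $p = s'.t'$ with $s' = R_X(s)\in\im R_X$ and $t'\in F(X)_{s'}^{b_F}$. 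Shift-invariance of $S_\bullet$ gives $p' = S_{F(X)}(s').S_{F(X)_{s'}}(t')$, whose first factor is $S_{F(X)}(R_X(s)) = T_X(s) =: u'\in\im T_X$. Since $t'$ has size at most $b_F+1$, Lemma~\ref{lem:connectivitymodulo} applied to $G$ bounds the size of $S_{F(X)_{s'}}(t')$ by $(b_F+1)b_1$; here one uses $G(F(X)_{s'}) = G(F(X))_{u'}$ to measure this size relative to $u'$. Hence $w' = u'.v'$ with $v' = S_{F(X)_{s'}}(t').q'$, a path from $u'$ of length at most $(b_F+1)b_1 + (b_G+1)$, a constant independent of $X$ and $w'$. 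Thus $v'\in \pa{G\circ F}(X)_{u'}^{b}$ for $b = (b_F+1)b_1 + b_G$, exhibiting the required global bound and completing the verification that $\pa{G\circ F, T_\bullet}$ is causal.
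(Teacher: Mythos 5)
Your proof is correct and takes essentially the same route as the paper's: the identical shift-invariance chain, the same reduction of continuity to ``composition of continuous maps'' via the primed formalism $F',G':{\cal X}_{\Sigma',\Delta,\ports}\to{\cal X}_{\Sigma',\Delta,\ports}$, and the same boundedness decomposition (boundedness of $G$, then of $F$, then Lemma~\ref{lem:connectivitymodulo} applied to $(G,S_\bullet)$ via shift-invariance of $S_\bullet$), with your bound $b=(b_F+1)b_1+b_G$ matching the paper's $c.(b'+1)+b''$. As a bonus, your write-up silently fixes a typo in the paper's boundedness step, where the witness should read $u'=R_X(u)\in\im R_X$ rather than $u'=S_{F(X)}(u)$.
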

\proof{
$[$Continuous$]$ In the $F', G':{\cal X}_{\Sigma',\Delta,\ports}\to{\cal X}_{\Sigma',\Delta,\ports}$ formalism, it suffices to state that the composition of two continuous functions is continuous. Without this formalism this decomposes into:
\begin{itemize} 
\item[$\bullet$] $(G\circ F)$ is continuous because it is the composition of two continuous functions.
\item[$\bullet$] Consider $T_{\bullet}=S_{F(\bullet)} \circ R_{\bullet}$. For all $X$, for all $m$, there exists $n$ such that for all $X'$, $X'^n=X^n$ implies $T_{X'}^m=T_{X}^m$. Indeed:
\end{itemize}
Fix some $X$ and $m$. Since $(G,S_{\bullet})$ is a causal dynamics, there exists $n'$ such that for all $X'$, $F(X')^{n'}=D'=F(X)^{n'}$ implies $S_{F(X')}^m=S_{D'}^m=S_{F(X)}^m$. Fix this $n'$. Since $(F,R_{\bullet})$ is a causal dynamics, there exists $n$ a radius such that for all $X'$, $X^{n}=D=X'^{n}$ implies $F(X)^{n'}=F(X')^{n'}$ and $R_X^{n'}=R_D^{n'}=R_{X'}^{n'}$. Now, for this $n$, $T_{X'}^m=S_{F(X')}^m\circ R_{X'}^{n'}=S_{D'}^m\circ R_{X'}^{n'}=S_{D'}^m\circ R_D^{n'}$, which, by the symmetrical is equal to $T_{X}^m$.\\
$[$Shift-invariant$]$ We have $G(F(X_u))=G(F(X)_{R_X(u)})=G(F(X))_{S_{F(X)}(R_X(u))}$,
$T_X(u.v)=S_{F(X)}(R_X(u.v))=S_{F(X)}(R_X(u).R_{X_u}(v)))=S_{F(X)}(R_X(u)).S_{F(X)_{R_X(u)}}(R_{X_u}(v))=T_X(u).S_{F(X_u)}(R_{X_u}(v))=T_X(u).T_{X_u}(v)$.\\
$[$Bounded$]$ Since $(G,S_{\bullet})$ is a causal dynamics, there exists a bound $b''$ such that for all $X$, for all $w''\in G(F(X))$, there exists $x''=S_{F(X)}(x')$ and $v''\in G(F(X))_{x''}^{b''}$ 
such that $w''=x''.v''$. Since $(F,R_{\bullet})$ is a causal dynamics, there exists a bound $b'$ such that there exists $u'=S_{F(X)}(u)$ and $v'\in F(X)_{u'}^{b''}$ such that $x'=u'.v'$. Let $u''=S_{F(X)}(u')=S_{F(X)}(R_X(u))=T_X(u)$. Now, according to Lemma \ref{lem:connectivitymodulo} applied to $(G,S_{\bullet})$ and points $u'$ and $x'$, there exists a bound $c$ such that there exists $t''\in G(F(X))_{u''}^{c.(b'+1)}$ and $x''=u''.t''$. Let $b=c.(b'+1)+b''$, we now have that for $u''=S_{F(X)}(u')=S_{F(X)}(R_X(u))=T_X(u)$ there exists $v''.t''\in G(F(X))_{u''}^{b}$ such that $w''=u''.t''.v''$. \qed}

The above proof was done via the axiomatic characterization of causal dynamics, as this paper enjoys a more straightforward formalization than \cite{ArrighiCGD}. In \cite{ArrighiCGD} the same result is proven via the constructive approach to causal graph dynamics (localizability), which has the advantage of extra information about the composed function. It establishes the following. Consider $F$ a causal dynamics induced by the local rule $f$ of radius $r$ (i.e. diameter $d=2r+1$). Consider $G$ a causal graph dynamics induced by the local rule $g$ of radius $s$ (i.e. diameter $e=2s+1$). Then $G\circ F$ is a causal graph dynamics induced by the local rule $g$ of radius $t=2rs+r+s$ (i.e. diameter $f=de$) from ${\cal D}^{t}$ to ${\cal G}_{\Sigma, \Delta, \pi}$ which maps $X^t$ to
$$\bigcup_{v\in X'} v.g(X'^{s}_{v})\quad\textrm{with}\quad X'=\bigcup_{u \in X^t} u.f(X^{r}_u).$$
The same result, with the transposed proof, still holds.

\noindent {\em Invertibility implies reversibility.} Let us turn our attention to some set-theoretical properties.

\begin{definition}[Shift-invariant invertible]
A shift-invariant dynamics $(F,R_\bullet)$ is {\em shift-invariant invertible} if and only if $F$ is a bijection, and there is an $S_\bullet$ such that $(F^{-1},S_\bullet)$ is a shift-invariant dynamics.
\end{definition}
Notice that there exists some shift-invariant dynamics $(F,R_\bullet)$ such that $F$ is a bijection but there exists no $S_\bullet$ such that $(F^{-1},S_\bullet)$ is a shift-invariant dynamics (Outline: take $\pi=\{a,b\}$, and map the $4$-sized directed segment to the $7$-sized directed segment pointed on the first four positions, and the $3$-sized directed segment to the $7$-sized directed segment but pointed on the last three positions). In this paper, we will not consider them. Notice also that there exists some shift-invariant dynamics $(F,R_\bullet)$ such that $(F^{-1},S_\bullet)$ is a shift-invariant dynamics, but $S_{F(X)}$ is not the inverse of $R_X$. The {\em Turtle example} of Figure \ref{fig:turtle} illustrates this possibility. 
\begin{figure}[htpb]
{\centering
\includegraphics[scale=2]{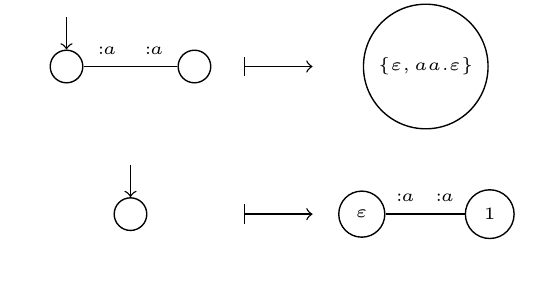}}
\caption{{\em The ``turtle'' dynamics.} \label{fig:turtle}}
\end{figure}
Again, in this paper, we will not consider them: we restrict ourselves to vertex-preserving invertible dynamics. 
\begin{definition}[Vertex-preserving invertible]
A shift-invariant dynamics $(F,R_\bullet)$ is {\em vertex-preserving invertible} if and only if $F$ is a bijection and for all $X$ we have that $R_X$ is a bijection.
\end{definition}
Those are automatically shift-invariant invertible:
\begin{lemma}[Vertex-preserving invertible is shift-invariant invertible]
If  $(F,R_\bullet)$ is a vertex-preserving invertible shift-invariant dynamics, then $(F^{-1},S_\bullet)$ is a shift-invariant dynamics, with $S_{Y}=(R_{F^{-1}(Y)})^{-1}$.
\end{lemma}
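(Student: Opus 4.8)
The plan is to verify the two shift-invariance axioms directly, after first checking that $(F^{-1},S_\bullet)$ is a well-defined dynamics in the sense of Definition~\ref{def:dynamicsmodulo}. Throughout I would fix a generalized Cayley graph $Y$ and abbreviate $X=F^{-1}(Y)$, so that $F(X)=Y$. Since $(F,R_\bullet)$ is vertex-preserving invertible, $R_X\colon V(X)\to V(F(X))=V(Y)$ is a bijection, so its inverse $S_Y=(R_X)^{-1}$ is a well-defined bijection $V(Y)\to V(F^{-1}(Y))$. This is precisely the typing a map $S_\bullet$ must have for $(F^{-1},S_\bullet)$ to be a dynamics, so the construction makes sense; note this is exactly the place where the bijectivity of each $R_X$ is used.

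First I would establish the first shift-invariance identity $F^{-1}(Y_u)=F^{-1}(Y)_{S_Y(u)}$. Set $w=S_Y(u)$, so that $R_X(w)=u$. The shift-invariance of $(F,R_\bullet)$ gives $F(X_w)=F(X)_{R_X(w)}=Y_u$; applying the bijection $F^{-1}$ to both sides yields $X_w=F^{-1}(Y_u)$. Since $X_w=F^{-1}(Y)_{S_Y(u)}$ by definition of $w$, this is the desired identity.

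Next I would prove the second identity $S_Y(u.v)=S_Y(u).S_{Y_u}(v)$ for $u\in Y$ and $v\in Y_u$. The key preliminary is that, by the identity just established, $F^{-1}(Y_u)=X_w$, whence $S_{Y_u}=(R_{X_w})^{-1}$, so that $t:=S_{Y_u}(v)$ satisfies $R_{X_w}(t)=v$. I would then apply the second shift-invariance axiom of $(F,R_\bullet)$ to the vertices $w\in X$ and $t\in X_w$, obtaining $R_X(w.t)=R_X(w).R_{X_w}(t)=u.v$. Inverting $R_X$ gives $S_Y(u.v)=(R_X)^{-1}(u.v)=w.t=S_Y(u).S_{Y_u}(v)$, as required.

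The computation is short, and I expect no deep obstacle; the only real care needed is the bookkeeping of types together with the order of the two steps. Specifically, the identification $S_{Y_u}=(R_{X_w})^{-1}$ used in the second axiom is only licensed once the first identity $F^{-1}(Y_u)=X_w$ has been proven, so the first axiom must genuinely be established before the second. Everything else follows mechanically by applying the bijection $F$ (resp. $R_X$) to a defining equation and then inverting it.
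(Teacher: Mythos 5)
Your proof is correct and takes essentially the same approach as the paper's: both identities are obtained by transporting the two shift-invariance axioms of $(F,R_\bullet)$ through the bijections $F$ and $R_X$. You are in fact slightly more careful than the paper, whose second computation uses the identification $S_{Y_{u'}}=(R_{X_u})^{-1}$ without remarking, as you do, that this step is only licensed by the first identity.
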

\proof{
Consider $Y$ and $u'.v'\in Y$. Take $X$ and $u.v\in X$ such that $F(X)=Y$, $R_X(u)=u'$ and $R_X(u.v)=u'.v'$. We have:
$$ F^{-1}(Y_{u'})=F^{-1}(F(X)_{R_X(u)})=F^{-1}(F(X_u))=X_{(R_X)^{-1}(u')}=F^{-1}(Y)_{S_Y(u')}.$$
Moreover, take $v\in X_u$ such that $R_X(u.v)=R_X(u).R_{X_u}(v)=u'.v'$. We have: 
$$ S_Y(u'.v')=(R_X)^{-1}(R_X(u.v))=u.v=(R_{X})^{-1}(u').(R_{X_u})^{-1}(v')= S_Y(u').S_{Y_{u'}}(v').$$}
Back to the case of a causal dynamics, the classical question to ask is whether the inverse is also a causal dynamics. 
\begin{definition}[Reversible]
A causal dynamics $(F,R_{\bullet})$ is {\em reversible} if and only if it is shift-invariant invertible with inverse $(F^{-1} ,S_{\bullet})$ a causal dynamics.
\end{definition}
The big question is whether the causality of a forward-time causal evolution $F$, entails that of the backward-time evolution $F^{-1}$. In other words: is causality stable under inversion? This question was answered positively in the earlier formalism of \cite{ArrighiCGD}, with a more lengthy proof.

\begin{theorem}[Reversibility]\label{th:reversibility}
Consider $(F,R_\bullet)$ a causal dynamics. If $(F,R_\bullet)$ is vertex-preserving invertible, then it is reversible, with inverse $(F^{-1},S_\bullet)$ where $S_{Y}=(R_{F^{-1}(Y)})^{-1}$.
\end{theorem}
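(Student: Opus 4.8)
The plan is to lean on the preceding Lemma (Vertex-preserving invertible is shift-invariant invertible), which already establishes that $(F^{-1},S_\bullet)$ with $S_Y=(R_{F^{-1}(Y)})^{-1}$ is a shift-invariant dynamics. By Definition \ref{def:causal} it then remains only to verify that $(F^{-1},S_\bullet)$ is bounded and continuous. Boundedness is immediate from vertex-preservation: since each $R_X$ is a bijection, so is each $S_Y=(R_{F^{-1}(Y)})^{-1}$, and in particular $S_Y$ is surjective, so $\im S_Y=V(F^{-1}(Y))$. Hence for any $Y$ and any $w'\in F^{-1}(Y)$ one may take $u'=w'\in\im S_Y$ and $v'=\varepsilon\in F^{-1}(Y)_{u'}^{0}$, giving $w'=u'.v'$; the bound $b=0$ works uniformly over all $Y$.

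For continuity I would pass to the $F'$-formalism following Definition \ref{def:dynamicsmodulo}, in which a dynamics $(F,R_\bullet)$ is encoded as a single function $F'$ on ${\cal X}_{\Sigma',\Delta,\ports}$ with $\Sigma'=\Sigma\times\{0,1\}$, sending $(X,S)$ to $(F(X),R_X(S))$, and in which continuity of the dynamics is equivalent to continuity of $F'$. Let $G'$ be the function associated in the same way to $(F^{-1},S_\bullet)$, so that $G'$ sends $(Y,T)$ to $(F^{-1}(Y),S_Y(T))$. The key computation is that $G'=(F')^{-1}$: composing, $G'(F'(X,S))=(X,\,S_{F(X)}(R_X(S)))$, and since $S_{F(X)}=(R_{F^{-1}(F(X))})^{-1}=R_X^{-1}$ with $R_X$ a bijection, we get $S_{F(X)}(R_X(S))=S$; the symmetric computation using $R_{F^{-1}(Y)}\circ S_Y=\mathrm{id}$ gives $F'\circ G'=\mathrm{id}$. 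Thus $F'$ is a bijection with inverse $G'$.

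Now ${\cal X}_{\Sigma',\Delta,\ports}$ is compact by Lemma \ref{lem:compactness} (applied with the finite state set $\Sigma'$) and, being metric, is Hausdorff. Since $(F,R_\bullet)$ is causal it is in particular continuous, so $F'$ is a continuous bijection from a compact space to a Hausdorff space, hence a homeomorphism; therefore $G'=(F')^{-1}$ is continuous. By the equivalence of the two formalisms this means $(F^{-1},S_\bullet)$ is continuous. Having shown $(F^{-1},S_\bullet)$ shift-invariant, bounded and continuous, it is causal, so $(F,R_\bullet)$ is reversible with the claimed inverse.

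I expect the main obstacle to be precisely the identification $G'=(F')^{-1}$, which is where vertex-preservation — the bijectivity of \emph{each} $R_X$, not merely of $F$ — is indispensable. Without it $F'$ need not be injective, since distinct markings could be merged or dropped by a non-injective $R_X$, and the clean ``continuous bijection on a compact Hausdorff space is a homeomorphism'' argument would collapse. This is consistent with the earlier remark that a shift-invariant dynamics can be a bijection without being reversible, so the hypothesis genuinely carries the proof rather than being cosmetic.
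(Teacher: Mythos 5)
Your proof is correct and takes essentially the same route as the paper's: shift-invariance from the preceding lemma, boundedness with bound $0$ from the surjectivity of $S_Y$, and continuity by passing to the $F'$-formalism on the compact space ${\cal X}_{\Sigma',\Delta,\ports}$ and inverting a continuous bijection. You merely make explicit two steps the paper leaves implicit, namely the identification $G'=(F')^{-1}$ and the ``continuous bijection from a compact space to a Hausdorff space is a homeomorphism'' argument.
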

\proof{For this proof it is convenient to switch to the the $F':{\cal X}_{\Sigma',\Delta,\ports}\to{\cal X}_{\Sigma',\Delta,\ports}$ formalism, introduced right after Def. \ref{def:dynamicsmodulo}. Since $F'=(F,R_\bullet)$ is shift-invariant invertible we have that $F'^{-1}=(F^{-1},S_\bullet)$ is shift-invariant. Since $F'$ is continuous over the compact space ${\cal X}_{\Sigma',\Delta,\ports}$, with $\Sigma'=\Sigma\times\{0,1\}$, we have that $F'^{-1}=(F^{-1},S_\bullet)$ is continuous. Since $R_X$ is bijective, so is $S_{F(X)}$, and thus so is $S_Y$ for any $Y$. Hence, $S$ is surjective and so $(F^{-1},S_\bullet)$ is bounded with bound $0$.}
\medskip

\noindent {\em Discussion: Garden-of-Eden.} 
Another important result in Cellular Automata theory, and which is related to invertibility questions, is the so-called Garden-of-Eden (a.k.a Moore-Myhill theorem), which states that pre-injectivity (i.e. injectivity over the set of finite configurations) is equivalent to surjectivity (over the set of configurations). This result has been extended to Cellular Automata over Cayley graphs, provided that the group which induces the Cayley graph has a certain property (it must be amenable) see \cite{CeccheriniEden}. Extending this result to a wider class of graphs is impossible for the surjective implies pre-injective part \cite{}, and is the subject of ongoing research for the pre-injective implies surjective part, see for instance \cite{Gromov}.\\
In the setting of this paper, there are at least two good reasons for the Garden-of-Eden theorem {\em not} to hold. The first reason is that here, Cellular Automata have been extended to generalized Cayley graphs, encompassing not just amenable Cayley graphs, but also the non-amenable ones, and may others: in fact all arbitrary finite degree graphs. The second reason is that here, Cellular Automata have been extended to time-varying graphs, for which pre-injectivity becomes a much weaker constraint (counting arguments fail as injectivity can be maintained by generating extra vertices, instead of saturating the space of internal states). For instance a causal dynamics which just adds a vertex to every free port, is injective but not surjective. It could be interesting, however, to look for non-trivial subclasses of causal dynamics for which the Garden-of-Eden still holds. 

\noindent {\em Discussion: Subclasses of causal dynamics.}

Finally we mention two natural subclasses of graph dynamics. The first is that where only the topology of the graph is evolving, i.e. there is no internal state on vertices nor edges. 
\begin{definition}[No-state a.k.a graph-only dynamics]
A dynamics $(F,R_{\bullet})$ is a {\em graph-only a.k.a no-state dynamics} if and only if it is defined over ${\cal X}_\pi$, i.e. the graphs carry no internal state.
\end{definition}
All of our results apply unchanged in this graph-only setting, as there was nowhere a particular need for an internal state. Moreover, it seems clear that causal graph-only dynamics can simulate general causal dynamics elegantly. But it is not so clear whether this still holds in the reversible case, for instance.\\
The second, dual class is that where only the internal states are evolving, i.e. the dynamics does not change the graph. This is of course is a widely studied case  \cite{PapazianRemila,Gruner,Gruner2,Gromov,CeccheriniEden,TomitaGRA,KreowskiKuske,TomitaSelfReproduction,TomitaSelfDescription,BFHAmalgamation,LoweAlgebraic,EhrigLowe,Taentzer,TaentzerHL}.
\begin{definition}[Graph-preserving a.k.a state-only dynamics]
A dynamics $(F,R_{\bullet})$ is a {\em graph-preserving a.k.a state-only dynamics} if and only if for all $X$, the graphs $X$ and $F(X)$ have the same structure, i.e. they are the same up to labellings $\sigma,\delta$.
\end{definition}
Again all of our results apply unchanged in this state-only setting, as there was nowhere a particular need for changing the topology, although changing the topology is one of the main contributions of this paper. Still, it could be said that the paper does port the Curtis-Hedlund-Lyndon theorem to Cellular Automata over arbitrary graphs, and not just Cayley graphs, which had not been done. Some results could of course be made tighter for state-only causal dynamics, such as that of the radius of a composition.\\
This graph-preserving class was defined by demanding that a certain property be preserved by the graph dynamics. Thus it falls into the broad class of subspace-preserving dynamics:
\begin{definition}[Subspace-preserving dynamics]
Consider $\{{\cal X}_1,\ldots,{\cal X}_n\}$ a partition of ${\cal X}_{\Sigma',\Delta,\ports}$. A dynamics $(F,R_{\bullet})$ is a {\em subspace-preserving dynamics} with respect to the partition if and only if for all $i$, we have that $X\in{\cal X}_i$ implies that $F(X)\in{\cal X}_i$.
\end{definition}
On the other hand, the no-state class was defined by restricting the definition space of the causal dynamics. Thus it falls into the broad class of subspace-restricted dynamics:
\begin{definition}[Subspace-restricted dynamics]
Consider ${\cal Y}$ a subset of ${\cal X}_{\Sigma',\Delta,\ports}$. A dynamics $(F,R_{\bullet})$ is a {\em subspace-restricted dynamics} with respect to ${\cal Y}$ if and only if its definition is restricted to ${\cal Y}$.
\end{definition}
In both of these broad classes, it seems cautious to demand that the subspaces be themselves compact spaces, as was the case with the no-state and graph-preserving classes. Finally, let us mention that in our study of reversibility, we required that our causal dynamics $(F,R_{\bullet})$ be vertex-preserving, i.e. that $R_X$ be a bijection between $V(X)$ and $V(F(X))$. This differs from graph-preservation: the connectivity may vary. Is is not clear whether this vertex-preserving class could have been defined through a subspace-preservation construction. 

\section{Conclusion}

{\em Summary.} First we have shown that a notion of graphs with ports modulo isomorphism (Definitions \ref{def:graphs}--\ref{def:pointedmodulo}) provides a generalization of Cayley graphs, in the following sense: each vertex can be named relatively to the origin; each graph represents a language and its equivalence relation (Definitions \ref{def:completeness}--\ref{def:adjacencystructures}, Theorem \ref{th:SP}); and they are equipped with a well-defined notion of translation (Definition \ref{def:shift}).
Second, we have shown that the set of these graphs forms a compact metric space (Definition \ref{def:metric} and Lemma \ref{lem:compactness}), entailing that continuous functions over this set are also uniformly continuous (Heine's theorem).
Third, this allowed us to characterize Cellular Automata over those generalized Cayley graphs as the set of shift-invariant, continuous, bounded dynamics (Definitions \ref{def:dynamicsmodulo}-\ref{def:causal}). This physically-motivated mathematical definition would have remained excessively abstract without our main result, showing that such causal dynamics are necessarily localizable, i.e. that they can be expressed as the synchronous, homogeneous application of a local rule (Definitions \ref{def:dynamics}-\ref{def:localizable}, Theorem \ref{th:main}). Fourth, we showed that the property of being a local rule is decidable and hence that causal dynamics are computable (Propositions \ref{prop:decidable}-\ref{prop:computable}).  Finally, we showed that the composition of two causal dynamics is itself a causal dynamics (Theorem \ref{th:composability}), and that the shift-invariant inverse of a causal dynamics is again a causal dynamics (Theorem \ref{th:reversibility}).

{\em Further works.} The mathematical relation between the causal dynamics of \cite{ArrighiCGD} and ours remains to be clarified -- for instance, decidability remains to be proven for the causal graph dynamics of \cite{ArrighiCGD}. Still, they are important features of models of computation. The fact that they are relatively straightforward to prove in this paper is a good indicator that the formalism presented is appropriate.\\ 
Our short terms plan, however, include: interpreting CA over generalized Cayley graphs as a dynamics over simplicial complexes as was started in \cite{ArrighiSURFACES}; deepening the study of the reversible case; formalizing the stochastic case. 
Moreover, one of the authors has been studying the intrinsic simulation and intrinsic universality of causal graph dynamics in \cite{MarMar}, an approach which  can still be taken further.

\section*{Acknowledgements} This work has been funded by the ANR-10-JCJC-0208 CausaQ grant and by the John Templeton Foundation, grant ID 15619.  It also benefited from discussions with Christophe Crespelle, Gilles Dowek, Emmanuel Jeandel, Viv Kendon, Jean Mairesse, Bruno Martin, David Meyer, Simon Perdrix, and \'Eric Thierry. We thank the anonymous referees, who pushed for a better paper.

\bibliography{biblio}
\bibliographystyle{plain}

\end{document}